\newcommand\eqdef{\mathrel{\overset{\makebox[0pt]{\mbox{\normalfont\tiny def}}}{=}}}
\providecommand{\lemmaname}{Lemma}
\newtheorem*{lem*}{\protect\lemmaname}
\begin{document}

\title{A Stronger Theorem Against Macro-realism}	

\date{\today}

\author{John-Mark A. Allen}
\affiliation{Department of Computer Science, University of Oxford, Wolfson Building, Parks Road, Oxford, OX1 3QD, United Kingdom}
\email{john-mark.allen@cs.ox.ac.uk}
\thanks{Corresponding author}
\author{Owen J. E. Maroney}
\affiliation{Faculty of Philosophy, University of Oxford, Radcliffe Humanities, Woodstock Road, Oxford, OX2 6GG, United Kingdom}
\author{Stefano Gogioso}
\address{Department of Computer Science, University of Oxford, Wolfson Building, Parks Road, Oxford, OX1 3QD, United Kingdom}

\begin{abstract}
Macro-realism is the position that certain ``macroscopic'' observables
must always possess definite values: \emph{e.g.} the table is in some
definite position, even if we don't know what that is precisely. The
traditional understanding is that by assuming macro-realism one can
derive the Leggett-Garg inequalities, which constrain the possible
statistics from certain experiments. Since quantum experiments can
violate the Leggett-Garg inequalities, this is taken to rule out the
possibility of macro-realism in a quantum universe. However, recent
analyses have exposed loopholes in the Leggett-Garg argument, which
allow many types of macro-realism to be compatible with quantum theory
and hence violation of the Leggett-Garg inequalities. This paper takes
a different approach to ruling out macro-realism and the result is
a no-go theorem for macro-realism in quantum theory that is stronger
than the Leggett-Garg argument. This approach uses the framework of
ontological models: an elegant way to reason about foundational issues
in quantum theory which has successfully produced many other recent
results, such as the PBR theorem.
\end{abstract}
\maketitle

\section{Introduction\label{sec:Introduction}}

The concept of macro-realism was introduced to the study of quantum
theory by Leggett \& Garg alongside their eponymous inequalities \cite{Leggett85}.
Loosely, macro-realism is the philosophical position that certain
``macroscopic'' quantities always possess definite values. The Leggett-Garg
inequalities (LGIs) are inequalities on observed measurement statistics
that are derived by assuming a particular form of macro-realism and
can be violated by measurements on quantum systems. The purpose of
the LGIs is therefore to prove that quantum theory and macro-realism
are incompatible. However, since its introduction the exact meaning
of ``macro-realism'' has been the subject of debate \cite{Ballentine87,Leggett87,Leggett88,Leggett02a,Leggett02b,KoflerBrukner13,Maroney14}. 

Recently, there has been a surge of interest in violation of the LGIs
from both physical and philosophical angles. The review in Ref. \cite{Emary14}
comprehensively covers experimental and theoretical work up to 2014.
More recent experimental work has focussed on noise tolerance and
closing experimental loopholes \cite{Knee+16,HuffmanMizel16,Zhou+15}.
Also, several theoretical investigations have aimed to interrogate
and clarify exactly what is required to derive the LGIs and what is
implied by their experimental violation \cite{Maroney14,ClementeKofler15,ClementeKofler16,KoflerBrukner13,Bacciagaluppi15,WildeMizel12}.
This paper follows the clarifying work of Ref.~\cite{Maroney14}.

Macro-realism is an ontological position; that is, the statement that
a certain ``macroscopic'' quantity is macro-realist is a statement
about the real state of affairs, the \emph{ontology}, of the universe.
In the field of quantum foundations, the framework of \emph{ontological
models} has been developed as a way to analyse such statements in
generality, making as few assumptions as possible. To use an ontological
model to describe a system requires just two core assumptions. First,
that the system being described has some \emph{ontological state}---the
real fact about how the system actually is (of course, which ontological
state is currently occupied is generally unknown). Second, that standard
probability theory can be applied to the ontological states. Their
generality has made ontological models very useful and they have been
used to derive and clarify many important results in quantum foundations
including: Bell's theorem \cite{Bell87}, the PBR theorem \cite{Pusey12},
and excess baggage \cite{Hardy04}. Reference~\cite{Leifer14b} comprehensively
reviews many of these results.

By using ontological models it is possible to illuminate and classify
various definitions of macro-realism precisely \cite{Maroney14}.
This analysis reveals some fundamental loopholes in the Leggett-Garg
argument for the incompatibility of quantum theory and macro-realism.
In particular, it shows that violation of the LGIs serves only to
rule out one subset of macro-realist models and that there are other
macro-realist models of quantum theory which are compatible with the
LGIs. For example, Bohmian quantum theory \cite{Bohm95,Durr09,Bohm52a,Bohm52b,deBroglie27}
can be viewed as a macro-realist model which reproduces all predictions
of quantum theory and therefore cannot be ruled out by the Leggett-Garg
argument. These loopholes are not experimental but logical; the only
way to close them is to fundamentally change the argument.

In this paper, a stronger theorem for the incompatibility between
quantum theory and macro-realism is presented. This theorem closes
a loophole in the Leggett-Garg argument and establishes that quantum
theory is incompatible with a larger subset of macro-realist models.
It does not prove incompatibility of quantum theory with all macro-realist
models since this is not possible; such a result would be in conflict
with the existence of the theory of Bohmian mechanics (section~\ref{sec:Loopholes-in-the}).
The theorem proceeds in a very different manner than the Leggett-Garg
argument and is related to the main theorem from Ref.~\cite{Allen16}.
It thereby circumvents many of the controversies of the original Leggett-Garg
approach.

It should be noted that mathematically there is no meaning to the
stipulation that macro-realism is about ``macroscopic'' quantities,
as opposed to other physical quantities that aren't ``macroscopic''.
Philosophically, however, it is easy to understand the desire for
macro-realism applying to ``macroscopic'' quantities. The types
of physical quantity that humans experience are all considered macroscopic
and they certainly appear to possess definite values. On the other
hand, it is much easier to imagine that microscopic quantities that
aren't directly observed behave in radically different ways. So while
there is nothing in the structure of quantum theory to pick-out ``macroscopic''
versus ``microscopic'', the motivation for considering macro-realism
does come from considering macroscopic quantities, hence the name.

This paper is organised as follows. The framework of ontological models
is introduced in section~\ref{sec:Ontological-Models}. This is then
used in section~\ref{sec:Macro-realism} to give the precise definitions
of macro-realism and its three sub-sets. Both of these sections are
without reference to quantum theory and are entirely independent of
it. This is as it should be, since those concepts apply to descriptions
of any physical system based on philosophical assumptions and do not
depend on any specific physical theory (of course their most common
applications are with quantum systems). In section~\ref{sec:Macro-realism-in-Quantum}
macro-realism is applied to quantum theory and several useful definitions
and lemmas about quantum ontological models are presented. Section~\ref{sec:Loopholes-in-the}
discusses the Leggett-Garg argument against macro-realism and explains
why the existence of loopholes \cite{Maroney14} restrict it to only
a simple class of macro-realist models. Section~\ref{sec:A-Stronger-No-Go}
then proves the main theorem which introduces a new approach, ruling
out a larger class of macro-realist models than Leggett-Garg while
using weaker assumptions. A discussion follows in section~\ref{sec:Discussion}
where conclusions are drawn, together with discussion of further research
directions including the possibility for experiments based on this
result.

\section{Ontological Models\label{sec:Ontological-Models}}

When we debate types of ``realism'' in quantum theory, we are normally
making an ontological argument. We're trying to say something about
the underlying actual state of affairs: whether such a thing exists,
how it can or can't behave, and so on. So it is with macro-realism.
The macro-realist, loosely, believes that the underlying ontology
in some definite sense \emph{possesses} a value for certain macroscopic
quantities at all times. The framework of ontological models \cite{Harrigan07,Harrigan10}
has been developed to make discussions about ontology in physics precise
and is the natural arena for such discussions.

An ontological model is exactly that: a bare-bones model for the underlying
ontology of some physical system. The system may also be correctly
described by some other, higher, theory---such as Newtonian mechanics
or a quantum theory---in which case the ontological model must be
constrained to reproduce the predictions of that theory. By combining
these constraints with the very general framework of ontological models,
interesting and general conclusions can be drawn about the nature
of the ontology. It is important to note that, while ontological models
are normally used to discuss quantum ontology, the framework itself
is entirely independent from quantum theory. The presentation of ontological
models here follows Ref.~\cite{Leifer14b}, which contains a much
more thorough discussion.

As noted above, the framework of ontological models relies on just
two core assumptions: 1) that the system of interest has some ontological
state $\lambda$ and 2) that standard probability theory may be applied
to these states. Together, these bring us to consider the ontology
of some physical system as represented by some measurable set $\Lambda$
of ontic states $\lambda\in\Lambda$ which the system might occupy.
The requirement that $\Lambda$ be measurable guarantees that probabilities
over $\Lambda$ can be defined.

In the lab, a system can be prepared, transformed, and measured in
certain ways. Each of these operational processes needs to be describable
in the ontological model.

Preparation must result in the system ending up in some ontic state
$\lambda$, though the exact state need not be known. Thus, each preparation
$P$ gives rise to some \emph{preparation measure }$\mu$ over $\Lambda$
which is a probability measure ($\mu(\emptyset)=0$, $\mu(\Lambda)=1$).
For every measurable subset $\Omega\subseteq\Lambda$, $\mu(\Omega)$
gives the probability that the resulting $\lambda$ is in $\Omega$.

Similarly, a transformation $T$ of the system will generally change
the ontic state from $\lambda^{\prime}\in\Lambda$ to a new $\lambda\in\Lambda$.
Recalling that the ontic state $\lambda^{\prime}$ represents the
entirety of the actual state of affairs before the transformation,
then the final state can only depend on $\lambda^{\prime}$ (and not
the preparation method or any previous ontic states, except as mediated
through $\lambda^{\prime}$). The transformations must therefore be
described as \emph{stochastic maps} $\gamma$ on $\Lambda$. A stochastic
map consists of a probability measure $\gamma(\cdot|\lambda^{\prime})$
for each initial ontic state, such that for any measurable $\Omega\subseteq\Lambda$,
$\gamma(\Omega|\lambda^{\prime})$ is the probability that the final
$\lambda$ lies in $\Omega$ given that the initial state was $\lambda^{\prime}$
\footnote{These stochastic maps, viewed as functions $\gamma(\Omega|\cdot)\,:\,\Lambda\rightarrow[0,1]$
(one for each measurable $\Omega\subseteq\Lambda$), must be \emph{measurable
functions}. That is, for any measurable set $\mathcal{S}\subseteq[0,1]$
and any $\gamma(\Omega|\lambda^{\prime})$, then $\{\lambda^{\prime}\in\Lambda\,:\,\gamma(\Omega|\lambda^{\prime})\in\mathcal{S}\}\subseteq\Lambda$
is a measurable set.}.

Finally, a measurement $M$ may give rise to some outcome $E$. Again,
which outcome is obtained can only depend on the current ontic state
$\lambda^{\prime}$. Therefore a measurement $M$ gives rise to a
conditional probability distribution $\mathbb{P}(E|\lambda^{\prime})$
\footnote{These probability distributions, viewed as functions $\Lambda\rightarrow[0,1]$,
must also be measurable functions.}. For this paper it is only necessary to consider measurements that
have countable sets of possible outcomes $E$.

Putting these parts together: if we have a system where a preparation
$P$ is performed followed by some transformation $T$ and some measurement
$M$ then the ontological model for that system must have some preparation
measure $\mu$, stochastic map $\gamma$, and conditional probability
distribution $\mathbb{P}$ such that the probability of obtaining
outcome $E$ is
\begin{equation}
\int_{\Lambda}\mathrm{d}\nu(\lambda)\,\mathbb{P}(E\,|\,\lambda)\label{eq:Ontological-model-probability}
\end{equation}
where 
\begin{equation}
\nu(\Omega)\eqdef\int_{\Lambda}\mathrm{d}\mu(\lambda)\,\gamma(\Omega\,|\,\lambda)\label{eq:Ontological-model-transformation}
\end{equation}
is the effective preparation measure obtained by preparation $P$
followed by transformation $T$. 

Note that ontological models are required to be closed under transformations.
That is, for any preparation $\mu$ and transformation $\gamma$ in
the model then the preparation $\nu$ defined by Eq.~(\ref{eq:Ontological-model-transformation})
must also exist in the model (since a preparation followed by a transformation
is itself a type of preparation).

So an ontological model for some physical system does the following:
\begin{enumerate}
\item defines a measurable set $\Lambda$ of ontic states for the system;
\item for each possible transformation method $T$ defines a stochastic
map $\gamma$ from $\Lambda$ to itself;
\item for each possible preparation method $P$ defines a preparation measure
$\mu$ over $\Lambda$, ensuring closure under the actions of the
stochastic maps as in Eq.~(\ref{eq:Ontological-model-transformation});
\item for each possible measurement method $M$ defines a conditional probability
distribution $\mathbb{P}$ over the outcomes given $\lambda\in\Lambda$;
\end{enumerate}
and then produces probabilities for measurement outcomes via Eqs.~(\ref{eq:Ontological-model-probability},\ref{eq:Ontological-model-transformation}).
The possible ontological models for a system can be constrained by
requiring that these probabilities match those given by other theories
known to accurately describe it (or probabilities obtained by experiment).

It should be noted that ontological models are not usually defined
in the measure-theoretic way presented here, but are often presented
using probability \emph{distributions} rather than measures. However
it has been noted in Ref.~\cite{Leifer14b} that this simplification
precludes many reasonably ontological models, including the archetypal
Beltrametti-Bugajski model \cite{Beltrametti95} which simply takes
ontic states to be quantum states. In order to do justice to macro-realism
the more accurate approach has therefore been taken here.

\section{Macro-realism\label{sec:Macro-realism}}

Exactly what is meant by ``macro-realism'' has been a subject of
contention ever since its introduction alongside the LGIs. This controversy
has fed into more recent work on understanding the violation of the
LGIs \cite{ClementeKofler15,ClementeKofler16,Bacciagaluppi15}. In
Ref.~\cite{Maroney14}, uses of the term ``macro-realism'' are
analysed and the concept is illuminated using ontological models.
One result of that paper is that the ``macro-realism'' intended
by Leggett and Garg, as well as many subsequent authors, can be made
precise in a reasonable way with the definition:

\emph{``A macroscopically observable property with two or more distinguishable
values available to it will at all times determinately possess one
or other of those values.'' \cite{Maroney14}}

Throughout this paper, ``distinguishable'' will be taken to mean ``in principle perfectly distinguishable by a single measurement in the noiseless case''. Note that macro-realism is defined with respect to some specific property $Q$. A macro-realist model might (and generally will) be macro-realist for some properties and not others. This property will have some values $\{q\}$ and to be ``observable'' must correspond to at least one measurement $M_{Q}$ with corresponding outcomes $E_{q}$.

Reference~\cite{Maroney14} fleshes out this definition using ontological
models and as a result describes three sub-categories of macro-realism.
In order to discuss these it will be necessary to first define an
\emph{operational eigenstate} in ontological models.

An operational eigenstate $Q_{q}$ of any value $q$ of an observable
property $Q$ is a set of preparation procedures $\{P_{q}\}$. This
set is defined so that immediately following any $P_{q}$ with any
measurement of the quantity $Q$ will result in the outcome $E_{q}$
with certainty. In other words, an operational eigenstate is simply
an extension of the concept of a quantum eigenstate to ontological
models: the preparations which, when appropriately measured, always
return a particular value of a particular property. Note that if two
values $q,q^{\prime}$ have operational eigenstates then they can
sensibly be called ``distinguishable'', since any system prepared
in a corresponding operational eigenstate can be identified to have
one value and not the other with certainty.

The three sub-categories of macro-realism for some quantity $Q$ are
then:
\begin{enumerate}
\item \emph{Operational eigenstate mixture macro-realism (EMMR) }--\emph{
}The only preparations in the model are operational eigenstates of
$Q$ or statistical mixtures of operational eigenstates. That is,
for each preparation $P_{q,i}$ of each operational eigenstate $Q_{q}$
let $\mu_{q,i}$ be the preparation measure and let $\{c_{q,i}\}_{q,i}$
be a set of positive numbers summing to unity. In EMMR every preparation
measure can be written in the form $\nu=\sum_{q}\sum_{i}c_{q,i}\mu_{q,i}$.
Note that this means that the space of ontic states $\Lambda$ need
only include those $\lambda$ accessible by preparing some operational
eigenstate of $Q$, as no other ontic states can ever be prepared.
\item \emph{Operational eigenstate support macro-realism (ESMR) }-- Like
EMMR, every ontic state $\lambda\in\Lambda$ accessible by preparing
some operational eigenstate but, unlike EMMR, there are preparation
measures in the model that are not statistical mixtures of operational
eigenstate preparations for $Q$. That is, let $\Omega$ be a measurable
subset for which $\mu_{q}(\Omega)=1$ for every operational eigenstate
preparation $\mu_{q}$. Then for every preparation measure $\nu$
in an ESMR model, $\nu(\Omega)=1$ for all such $\Omega$. Moreover,
there exists some preparation procedure not in the mixture form required
by EMMR. In other words, if you're certain to prepare an ontic state
from some subset $\Omega$ when preparing an operational eigenstate
of $Q$, then you're also certain to prepare an ontic state from $\Omega$
from any other preparation measure in the model.
\item \emph{Supra eigenstate support macro-realism (SSMR)} -- Every ontic
state $\lambda$ in the model will produce some specific value $q_{\lambda}$
of $Q$ when a measurement of $Q$ is made, but some of those ontic
states are not accessible by preparing any operational eigenstate
of $Q$. That is, for every $\lambda\in\Lambda$ then there is some
value $q_{\lambda}$ of $Q$ such that $\mathbb{P}_{M}(E_{q_{\lambda}}\,|\,\lambda)=1$
for every $\mathbb{P}_{M}$ corresponding to a measurement of $Q$.
Moreover, there exists some measurable subset $\Omega\subset\Lambda$
and preparation measure $\nu$ such that $\nu(\Omega)>0$ while $\mu_{q}(\Omega)=0$
for every preparation measure $\mu_{q}$ of an operational eigenstate
$Q_{q}$.
\end{enumerate}
To help unpack these definitions, they are illustrated in Fig.~\ref{fig:MR-figure}.

\begin{figure}
\includegraphics[width=1\columnwidth]{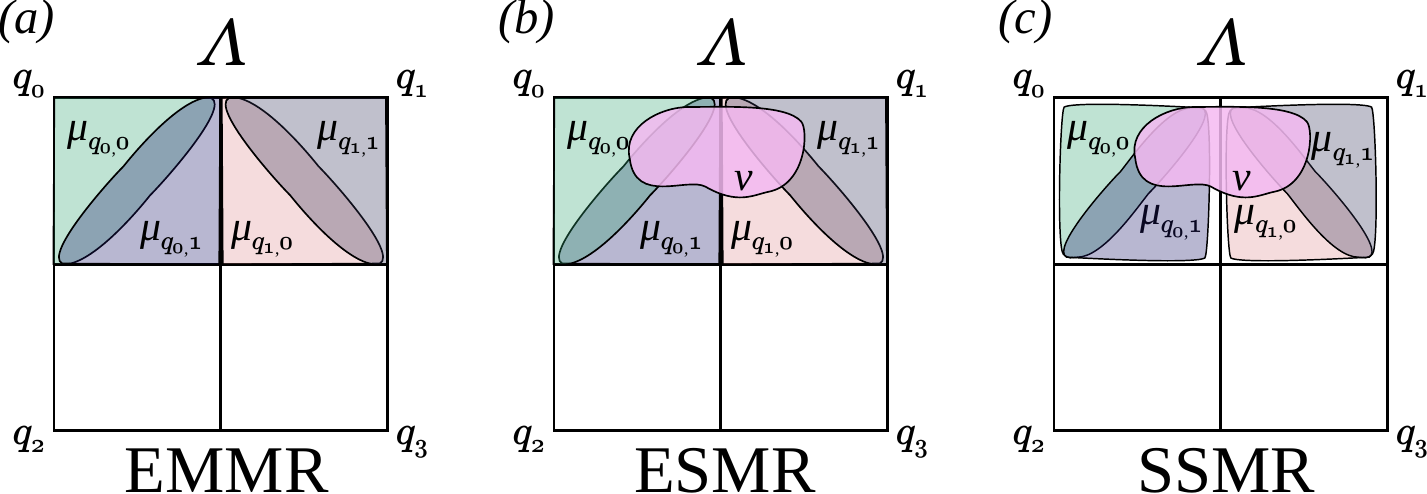}

\protect\caption{Illustration of the three sub-categories of macro-realism as defined
in the text. In each case the large square represents the whole ontic
state space $\Lambda$, the four smaller squares indicate those subspaces
of ontic states associated with each value $q_{0..3}$ of some quantity
$Q$, and the shaded regions represent those ontic states accessible
by preparing some select preparation measures.\protect \\
(a) illustrates EMMR, where the squares for each $q_{i}$ are all
ontic states preparable via some operational eigenstate preparation $\mu_{q_{i},j}$
and all other allowed preparation measures are simply statistical
mixtures of these, \emph{e.g. $\nu=\frac{1}{3}\left(\mu_{q_{0},0}+\mu_{q_{0},1}+\mu_{q_{1},0}\right)$
}is permissible.\protect \\
(b) illustrates ESMR, where the state space is exactly as in EMMR,
but now more general preparation measures, such as the $\nu$ illustrated,
are permitted.\protect \\
(c) illustrates SSMR, where now every $\lambda$ in the box for $q_{i}$
must produce outcome $q_{i}$ in any appropriate measurement of $Q$,
but the operational eigenstates no longer fill these boxes. That is,
there are ontic states that lie outside the preparations for operational
eigenstates. General preparation measures over the boxes are still
permitted.}
\label{fig:MR-figure}

\end{figure}

In each of these cases, every ontic state $\lambda$ (up to possible
measure-zero sets of exceptions\footnote{One persuasive advantage to defining ontological models with probability
distributions, rather than measures, is that it eliminates the need
to append ``up to possible measure-zero sets of exceptions'' to
many otherwise-clear statements. We apologise for the necessary repetition
of this phrase in this paper.}) is associated with a specific value $q_{\lambda}$ of $Q$, such
that it can be sensibly said that $\lambda$ ``possesses'' $q_{\lambda}$.
This is why they are all considered types of macro-realism. Let's
consider this for each case in turn.

In an EMMR model, any preparation can be viewed as a choice between
preparations of operational eigenstates for values of $Q$, so any
resulting ontic state $\lambda$ ``possesses'' the corresponding
value $q$ since it could have been obtained by preparing an operational
eigenstate of $q$.

In an ESMR model, for every ontic state $\lambda$ accessible by preparing
some non-operational-eigenstate measure $\nu$, $\lambda$ can also
be prepared by an operational eigenstate of exactly one value of $Q$
(up to measure-zero sets of exceptions) and so similarly each ontic
state ``possesses'' the corresponding value of $Q$. 

In SSMR models the link between each $\lambda\in\Lambda$ and the
corresponding $q_{\lambda}$ is explicit in the definition. For any
$\lambda$ the $q_{\lambda}$ is that value for which $\mathbb{P}_{M}(E_{q_{\lambda}}\,|\,\lambda)=1$
as specified in the above definition. Thus, $\lambda$ ``possesses''
the value which it must return with certainty in any relevant measurement.

Note that these three sub-categories of macro-realism are defined
such that they are mutually exclusive, but they still have a natural
hierarchy to them. EMMR can be seen as a more restrictive variation
on ESMR, since you can make an EMMR model into an ESMR model simply
by including a single preparation measure that is not a statistical
mixture of operational eigenstate preparations (the ontic state space
and everything else can remain unchanged). Similarly, SSMR can be
seen as a less restrictive variation on ESMR. In ESMR, every ontic
state $\lambda$ can be obtained by preparing an operational eigenstate
preparation for a value of $Q$, by definition of operational eigenstate
it follows that a measurement of $Q$ will therefore return some specific
value for each ontic state (up to measure-zero sets of exceptions),
which is the requirement on the ontic states for SSMR.

\section{Macro-realism in Quantum Theory\label{sec:Macro-realism-in-Quantum}}

Now that macro-realism and ontological models have been defined independently
from quantum theory, it is time to apply these definitions to quantum
systems. Doing this will enable precise discussion of the loopholes
in the Leggett-Garg argument and proof of a stronger theorem. It will
also be necessary to state some useful definitions and lemmas from
ontological models.

\subsection{Ontological Models for Quantum Systems}

The postulates of quantum theory assign some Hilbert space $\mathcal{H}$
to a quantum system with dimension $d$ so that the set of physical
pure quantum states is $\mathcal{P}(\mathcal{H})=\{|\psi\rangle\in\mathcal{H}\,:\,\left\Vert \psi\right\Vert =1,\,|\psi\rangle\sim\mathrm{e}^{i\theta}|\psi\rangle\}$.
They also assign unitary operators on $\mathcal{H}$ to transformations
and orthonormal bases over $\mathcal{H}$ to measurements. For simplicity,
consider only systems with $d<\infty$. With this in mind, ontological
models for quantum systems can be described in generality.

An ontological model for a quantum system is defined by some ontic
state space $\Lambda$ as well as the relevant preparation measures,
stochastic maps, and conditional probability distributions. For each
state $|\psi\rangle\in\mathcal{P}(\mathcal{H})$ there must be a set
$\Delta_{|\psi\rangle}$ of preparation measures $\mu_{|\psi\rangle}$---potentially
one for each distinct\emph{ }method for preparing $|\psi\rangle$.
Similarly, for each unitary operator $U$ on $\mathcal{H}$ there
is a set $\Gamma_{U}$ of stochastic maps $\gamma_{U}$ and for each
basis measurement $M=\{|i\rangle\}_{i=0}^{d}$ there is a set $\Xi_{M}$
of conditional probability distributions $\mathbb{P}_{M}$---again,
potentially one stochastic map/probability distribution for each experimental
method for transforming/measuring.

In order to investigate the properties of possible ontological models
it is required that the ontological model is capable of reproducing
the predictions of quantum theory. That is, for any $|\psi\rangle\in\mathcal{P}(\mathcal{H})$,
$\mu\in\Delta_{|\psi\rangle}$, $U,$ $\gamma\in\Gamma_{U}$, basis
$M$, and $\mathbb{P}_{M}\in\Xi_{M}$ it is required that
\begin{equation}
|\langle i|U|\psi\rangle|^{2}=\int_{\Lambda}\mathrm{d}\nu(\lambda)\,\mathbb{P}_{M}(|i\rangle\,|\,\lambda)\quad\forall|i\rangle\in M\label{eq:ontological-model-quantum-probability}
\end{equation}
where $\nu$ is defined as in Eq.~(\ref{eq:Ontological-model-transformation}).
Note also that $\nu\in\Delta_{U|\psi\rangle}$ since preparing the
quantum state $|\psi\rangle$ (via any ontological preparation $\mu\in\Delta_{|\psi\rangle}$)
followed by performing the quantum transformation $U$ (via any $\gamma\in\Gamma_{U}$)
is simply a way to prepare the quantum state $U|\psi\rangle$.

\subsection{State Overlaps}

In order to properly discuss macro-realism in quantum theory it is
necessary to discuss how to quantify state overlaps in ontological
models. In quantum theory any pair of non-orthogonal states $|\psi\rangle,|\phi\rangle\in\mathcal{P}(\mathcal{H})$
overlap by an amount quantified by the Born rule probability $|\langle\psi|\phi\rangle|^{2}$.
That is, for a system prepared in state $|\psi\rangle$ the probability
for it to behave (for all intents and purposes) like it was prepared
in state $|\phi\rangle$ is $|\langle\psi|\phi\rangle|^{2}$.

Adapting this logic to an ontological model for the quantum system,
consider the probability that a system prepared according to measure
$\mu$ behaves like it was prepared according to $\nu$. That is,
the probability that the ontic state obtained from $\mu$ could also
have been obtained from $\nu$. This quantity is called the \emph{asymmetric
overlap} and is mathematically defined as \cite{Allen16,Ballentine14,Leifer13b,Maroney12a}
\begin{equation}
\varpi(\nu\,|\,\mu)\eqdef\inf\{\mu(\Omega)\,:\,\Omega\subseteq\Lambda,\,\nu(\Omega)=1\},
\end{equation}
recalling that the infimum of a subset of real numbers is the greatest lower bound of that set. This is because a preparation of $\nu$ has unit probability of producing a $\lambda$ from each measurable subset $\Omega\subseteq\Lambda$ that satisfies $\nu(\Omega)=1$ \footnote{When defining $\varpi$ in this paper, the subsets $\Omega\subseteq\Lambda$ that are extremised over are all taken to be \emph{measurable} subsets of $\Lambda$. This detail is omitted in the text for clarity.}. Therefore by taking the minimum such $\Omega$, $\mu(\Omega)$ gives the desired probability.

It is not difficult to see that---for an ontological model of a quantum system---the Born rule upper bounds the asymmetric overlap. Almost all\footnote{Here, as elsewhere in measure theory, ``almost'' is used to mean ``up to measure-zero sets of exceptions''.} ontic states that can be obtained by preparing $|\phi\rangle$ will also return outcome $|\phi\rangle$ in any relevant measurement and so if a preparation of $|\psi\rangle$ results in a $\lambda$ that could have been obtained by preparing $|\phi\rangle$ then we know that this ontic state will almost surely return $|\phi\rangle$ in a relevant measurement. It follows that for any $\mu\in\Delta_{|\psi\rangle}$ and $\nu\in\Delta_{|\phi\rangle}$
\begin{equation}
\varpi(\nu\,|\,\mu)\leq|\langle\phi|\psi\rangle|^{2}.\label{eq:overlap-bound}
\end{equation}
A full proof of this is provided in Appendix~\ref{sec:app:Bounding-Asymmetric-Overlaps}.

It is useful to overload the definition of asymmetric overlap to include
the probability that preparing an ontic state via $\mu$ will produce
a $\lambda$ accessible by preparing some \emph{quantum state} $|\phi\rangle$.
This corresponds to 
\begin{equation}
\varpi(|\phi\rangle\,|\,\mu)\eqdef\inf\left\{ \mu(\Omega)\;:\;\nu(\Omega)=1,\;\forall\nu\in\Delta_{|\phi\rangle}\right\} ,
\end{equation}
which is clearly also upper bounded by the Born rule $\varpi(|\phi\rangle\,|\,\mu)\leq|\langle\phi|\psi\rangle|^{2}$.

The next useful generalisation is the overlap of some preparation
measure $\mu$ with \emph{two} quantum states $|0\rangle,|\phi\rangle$.
This can be thought of as the \emph{union} of the overlaps expressed
by $\varpi(|\phi\rangle\,|\,\mu)$ and $\varpi(|0\rangle\,|\,\mu)$
and is mathematically defined as
\begin{equation}
\varpi(|0\rangle,|\phi\rangle\,|\,\mu)\eqdef\inf\left\{ \mu(\Omega)\;:\;\nu(\Omega)=\chi(\Omega)=1,\;\forall\nu\in\Delta_{|\phi\rangle},\chi\in\Delta_{|0\rangle}\right\}
\end{equation}
or, equivalently as
\begin{equation}
\varpi(|0\rangle,|\phi\rangle\,|\,\mu) = \inf\left\{ \mu(\Omega_\phi \cup \Omega_0)\;:\;\nu(\Omega_\phi)=\chi(\Omega_0 )=1,\;\forall\nu\in\Delta_{|\phi\rangle},\chi\in\Delta_{|0\rangle}\right\}.
\end{equation}
So in this way $\varpi(|0\rangle,|\phi\rangle\,|\,\mu)$ expresses
the probability that sampling from $\mu$ produces a $\lambda$ accessible
by preparing either $|0\rangle$ or $|\phi\rangle$.

Since $\varpi(|0\rangle,|\phi\rangle\,|\,\mu)$ expresses the probability
of a disjunction of two events that have probabilities $\varpi(|0\rangle\,|\,\mu)$
and $\varpi(|\phi\rangle\,|\,\mu)$, it follows (by Boole's inequality) that it is bounded as follows
\begin{equation}
\varpi(|0\rangle,|\phi\rangle\,|\,\mu)\leq\varpi(|0\rangle\,|\,\mu)+\varpi(|\phi\rangle\,|\,\mu).\label{eq:tripartite-asymmetric-bound}
\end{equation}

There are special triples of quantum states $\{|\psi\rangle,|\phi\rangle,|0\rangle\}$
for which the bound Eq.~(\ref{eq:tripartite-asymmetric-bound}) is
necessarily saturated for all $\mu\in\Delta_{|\psi\rangle}$. \emph{Anti-distinguishable
triples}\footnote{Anti-distinguishability was introduced in Ref.~\cite{Caves02} under
the name ``PP-incompatibility'' and was given the more informative
name of anti-distinguishability in Ref.~\cite{Leifer14b}.} have this property. They are triples $\{|\psi\rangle,|\phi\rangle,|0\rangle\}$
which have a quantum measurement with outcomes $E_{\text{\textlnot\ensuremath{\psi}}},E_{\text{\textlnot\ensuremath{\phi}}},E_{\text{\textlnot0}}$
where the probability of obtaining outcome $E_{\text{\textlnot\ensuremath{\psi}}}$
for a system in state $|\psi\rangle$ is zero (and similarly for $E_{\text{\textlnot\ensuremath{\phi}}},E_{\text{\textlnot0}}$).
In other words, there is a measurement which can, with certainty,
identify one state that was definitely not prepared. For this to be
possible, almost no ontic states can be accessible by preparing all three states in the triple, because
any such ontic state wouldn't be able to return any of the outcomes
in the measurement. A full proof of this is provided in Appendix~\ref{sec:app:Anti-Distinguishability-and-Asymmmetric-Overlap}.

It is known that\footnote{This result was proved in Ref.~\cite{Caves02} but Ref.~\cite{Barrett14}
points out and corrects a typographical error in their result (the
original had the second inequality as a strict inequality, which is
incorrect).} triples of states $\{|\psi\rangle,|\phi\rangle,|0\rangle\}$ with
inner products $a=|\langle\psi|\phi\rangle|^{2}$, \textbf{$b=|\langle\psi|0\rangle|^{2}$},
and $c=|\langle\phi|0\rangle|^{2}$ satisfying 
\begin{equation}
a+b+c<1\quad\mathrm{and}\quad(1-a-b-c)^{2}\geq4abc\label{eq:anti-distinguishable-inequalities}
\end{equation}
are necessarily anti-distinguishable by a projective measurement.

Finally, it is useful to consider the effect that unitary transformations
have on asymmetric overlaps. If some transformation $U$ is applied,
via some stochastic map $\gamma\in\Gamma_{U}$, to a system prepared
according to measure $\mu\in\Delta_{|\psi\rangle}$ then the overlap
with some quantum state $|\phi\rangle$ cannot decrease. That is
\begin{equation}
\varpi(U|\phi\rangle\,|\,\mu^{\prime})\geq\varpi(|\phi\rangle\,|\,\mu)\label{eq:overlap-transformation-bound}
\end{equation}
where $\mu^{\prime}\in\Delta_{U|\psi\rangle}$ is the preparation
measure obtained by applying $\gamma$ to $\mu$ as in Eq.~(\ref{eq:Ontological-model-transformation}).
This is because any ontic state accessible by preparing $\mu$ will
be mapped by $\gamma$ onto ontic states accessible by preparing $\mu^{\prime}$
(since preparing $\mu$ then applying $\gamma$ is a preparation of
$\mu^{\prime}$). Similarly, ontic states accessible by preparing
$|\phi\rangle$ map onto states preparable by $U|\phi\rangle$. Thus
ontic states in the overlap of $\mu$ and $|\phi\rangle$ are mapped
by $\gamma$ onto states in the overlap of $\mu^{\prime}$ and $U|\phi\rangle$.
Once again, a full proof is provided in Appendix~\ref{sec:app:Asymmetric-Overlaps-and-Transformations}.

\subsection{Macro-realism for Quantum Systems}

Having laid the groundwork the above definition of macro-realism can
now be applied to quantum systems.

First consider what can count as a ``macroscopically observable''
quantity $Q$. To be observable $Q$ must correspond to some quantum
measurement $M_{Q}$. Therefore, there is some orthonormal basis $\mathcal{B}_{Q}$
so that for each value $q$ of $Q$ the corresponding outcome of $M_{Q}$
is a state in $\mathcal{B}_{Q}$. In order to make sense of the above
definitions $Q$ must also have operational eigenstates for each value
$q$ of $Q$. Fortunately this is straightforward in quantum theory:
every state in $\mathcal{B}_{Q}$ is an operational eigenstate. Moreover,
because the elements of $\mathcal{B}_{Q}$ are orthogonal it follows
that preparations corresponding to different values $q,q^{\prime}$
of $Q$ are therefore distinguishable.

Now consider an ESMR or EMMR model for a quantum system. For any state
$|\psi\rangle\in\mathcal{P}(\mathcal{H})$ and any eigenstate $|0\rangle\in\mathcal{B}_{Q}$
the asymmetric overlap $\varpi(|0\rangle\,|\,\mu)$ for any $\mu\in\Delta_{|\psi\rangle}$
must be maximal. That is,
\begin{equation}
\varpi(|0\rangle\,|\,\mu)=|\langle0|\psi\rangle|^{2},\quad\forall\mu\in\Delta_{|\psi\rangle}.\label{eq:MR-saturated-overlap}
\end{equation}
A full proof of this is provided in Appendix~\ref{sec:app:ESMR,-EMMR,-and-asymmetric-overlap}
and only an outline provided here. Since each state in $\mathcal{B}_{Q}$
is orthogonal to every other, no ontic state (up to measure-zero exceptions)
can be accessible by preparing more than one state in $\mathcal{B}_{Q}$.
Moreover---by definition of ESMR and EMMR---every ontic state must
be accessible by preparing \emph{some} operational eigenstate of $Q$.
Therefore, any ontic state accessible by preparing $\mu$ must also
be accessible by preparing \emph{exactly one} state in $\mathcal{B}_{Q}$
(up to measure-zero sets of exceptions). So the sum $\sum_{i}\varpi(|i\rangle|\mu)=1$
because each overlap is disjoint (up to measure-zero sets of exceptions)
and by Eq.~(\ref{eq:overlap-bound}) each must be maximal, giving
Eq.~(\ref{eq:MR-saturated-overlap}).

Equation~(\ref{eq:MR-saturated-overlap}) is the key consequence
of ESMR and EMMR that leads to the no-go theorem with quantum systems
presented in section~\ref{sec:A-Stronger-No-Go}.

\section{Loopholes in the Leggett-Garg Proof\label{sec:Loopholes-in-the}}

The aim of the LGIs has always been to rule out macro-realist ontologies
for quantum theory when the inequalities are violated. However, in
light of the above precise definition of macro-realism, some loopholes
in the argument can be identified.

The first loophole is that violation of the LGIs cannot rule out SSMR
models of quantum systems. Indeed, no argument that rests on compatibility
with quantum predictions can completely rule out SSMR models since
there exists a well-known SSMR model for quantum systems that reproduces
all quantum predictions: Bohmian mechanics \cite{Bohm52a,Bohm52b,deBroglie27}.

To see that Bohmian mechanics provides an SSMR ontology consider,
for example, the Bohmian description of a single spinless point particle
in three-dimensional space (the argument for more general systems
is analogous). Bohmian mechanics has the ontic state as a pair $\lambda=(\vec{r},|\psi\rangle)\in\mathbb{R}^{3}\times\mathcal{P}(\mathcal{H})$
where $\vec{r}$ is the actual position of the particle and $|\psi\rangle$
is the quantum state (or ``pilot wave''). Note that the quantum
state is part of the ontology here. The ``macroscopically observable
property'' is the position of the particle, $\vec{r}$, and any sharp
measurement of position will reveal the true value of $\vec{r}$ with
certainty. Thus, for any ontic state $\lambda$ there is some value
of the macroscopically observable property (that is, $\vec{r}$) which
is obtained with certainty from any appropriate measurement. Thus,
Bohmian mechanics provides an SSMR ontological model.

The second loophole is that LGI violation is also unable to rule out
ESMR ontological models. This is also demonstrated through a counter-example
in the form of the Kochen-Specker model for the qubit \cite{Kochen67},
which is an ontological model satisfying ESMR\footnote{Strictly speaking, the Kochen-Specker model was not defined with a
post-measurement update rule and so cannot deal with sequences of
measurements (and therefore Leggett-Garg experiments). However, it
is simple to append the obvious update rule ``prepare a new state
corresponding to the measurement outcome'' and this fixes this issue.}. The Kochen-Specker model exactly reproduces quantum predictions
for $d=2$ dimensional Hilbert Spaces. As the LGIs are defined in
$d=2$ the Kochen-Specker model will violate them.

A key question is \emph{why} these counter-examples evade the Leggett-Garg
argument. To derive the LGIs, one needs an additional assumption:
\emph{non-invasive measurability}. 
The Leggett-Garg approach compares the non-invasiveness of the measurement process on operational eigenstates with the invasiveness on preparations that are not operational eigenstates. Their violation shows that these other preparations cannot be expressed as mixtures of operational eigenstates. In neither ESMR nor SSMR can generic preparations be related to mixtures of operational eigenstates, so the Leggett Garg approach generically has loopholes for these types of macro-realism \cite{Maroney14}.
Bohmian mechanics and the Kochen-Specker model are both examples: they contain measurement disturbances that violate the non-invasive measurability assumption, while still satisfying SSMR and ESMR respectively. The crux is that both SSMR and ESMR models can include measurements that don't disturb the distribution over $\Lambda$ if the system is prepared in an operational eigenstate, but still disturb the distribution over $\Lambda$ for systems prepared in other ways.

EMMR, by contrast, requires that all preparations are represented
by statistical mixtures of operational eigenstates. If it can be demonstrated
that operational eigenstates are not disturbed by a given measurement,
then according to EMMR no preparations can be disturbed by that measurement.
It is this feature that prevents EMMR models from violating the LGI
(see Ref.~\cite{Maroney14} for a more extensive discussion of this
point).

Recent experiments \cite{Knee+16,HuffmanMizel16} following Ref. \cite{WildeMizel12}
have sought to address a ``clumsiness loophole'' in Leggett-Garg.
They drop non-invasive measurability as an assumption by incorporating
control experiments to check the disturbance of the measurement on
the operational eigenstates. These approaches follow the Leggett-Garg
argument quite closely and show that the disturbance on some general
preparation cannot be explained in terms of disturbances on a statistical
mixture of operational eigenstates. As a result, they are still only
able to rule out EMMR models. 

So the Leggett-Garg proof, even taking into account the clumsiness
loophole, only rules out EMMR macro-realism and leaves loopholes for
SSMR and ESMR. Moreover, the loophole for SSMR models cannot be fully
plugged by any proof because Bohmian mechanics exists as a counter-example.
Similarly, the loophole for ESMR cannot be fully pluged in $d=2$
dimensions, since the Kochen-Specker model exists as a counter-example.
This leaves a clear question: can the ESMR loophole be closed by another
theorem for any $d>2$? Answering this question needs a different
approach, one which does not make use of the measurement disturbance
assumptions at all.

\section{A Stronger No-Go Theorem\label{sec:A-Stronger-No-Go}}

Using the machinery developed above, it is possible to prove a theorem
that rules out both ESMR and EMMR models for quantum systems with
``macroscopically observable'' properties with $n>3$ values. This
theorem is therefore stronger than the Leggett-Garg proof as it rules
out ESMR models as well as EMMR models.

First, assume there is an ontological model for a quantum system with
$d>3$ dimensions which is ESMR or EMMR for quantity $Q$ with $n=d>3$
values. By applying Eqs.~(\ref{eq:tripartite-asymmetric-bound},\ref{eq:overlap-transformation-bound},\ref{eq:MR-saturated-overlap}) to specially chosen quantum states it is possible to prove a contradiction. 

Let $|0\rangle\in\mathcal{B}_{Q}$ be the eigenstate of some value
$q$ of $Q$ and let $|\psi\rangle\in\mathcal{P}(\mathcal{H})$ be any other state of the system
such that $|\langle0|\psi\rangle|^{2}\in(0,\frac{1}{2})$. Since quantum states in $\mathcal{P}(\mathcal{H})$ are equivalent up to global phase, $\langle 0|\psi\rangle$ can be taken to be a positive real without loss of generality. Now select another orthonormal basis $\mathcal{B}^{\prime}=\{|0\rangle\}\cup\{|i^{\prime}\rangle\}_{i=1}^{d-1}$
for $\mathcal{H}$ such that
\begin{eqnarray}
|\psi\rangle & = & \alpha|0\rangle+\beta|1^{\prime}\rangle+\tau|2^{\prime}\rangle,\\
\alpha & \in & \left(0,\frac{1}{\sqrt{2}}\right),\\
\beta & \eqdef & \sqrt{2}\,\alpha^{2},
\end{eqnarray}
such a basis always exists since, for any $\alpha$, a $\tau\in(0,1)$ exists such that $|\psi\rangle$ is normalised. Define another state with respect to the same basis
\begin{eqnarray}
|\phi\rangle & \eqdef & \delta|0\rangle+\eta|1^{\prime}\rangle+\kappa|3^{\prime}\rangle,\\
\delta & \eqdef & 1-2\alpha^{2},\\
\eta & \eqdef & \sqrt{2}\,\alpha.
\end{eqnarray}

These states have been chosen such that $\langle0|\psi\rangle=\alpha=\langle\phi|\psi\rangle$
and therefore there is some unitary $U$ satisfying $U|0\rangle=|\phi\rangle$
and $U|\psi\rangle=|\psi\rangle$. Moreover, the inner products of
$\{|\psi\rangle,|\phi\rangle,|0\rangle\}$ satisfy Eq.~(\ref{eq:anti-distinguishable-inequalities})
meaning that $\{|\psi\rangle,|\phi\rangle,|0\rangle\}$ is an anti-distinguishable
triple and therefore satisfies Eq.~(\ref{eq:tripartite-asymmetric-bound})
with equality.

Choose any preparation measure $\mu^{\prime}\in\Delta_{|\psi\rangle}$
for $|\psi\rangle$ and any stochastic map $\gamma\in\Gamma_{U}$
for $U$. If $|\psi\rangle$ is prepared according to $\mu^{\prime}$
and then transformed according to $\gamma$ such that $\mu\in\Delta_{|\psi\rangle}$
is the resulting preparation measure (as in Eq.~(\ref{eq:Ontological-model-transformation}))
then 
\begin{eqnarray}
\varpi(|\phi\rangle,|0\rangle\,|\,\mu) & = & \varpi(|\phi\rangle\,|\,\mu)+\varpi(|0\rangle\,|\,\mu)\\
 & \ge & \varpi(|0\rangle\,|\,\mu^{\prime})+\varpi(|0\rangle\,|\,\mu)\\
 & = & 2|\langle0|\psi\rangle|^{2}=2\alpha^{2}
\end{eqnarray}
where the first line follows from anti-distinguishability, the second
from Eq.~(\ref{eq:overlap-transformation-bound}), and the third
from ESMR/EMMR via Eq.~(\ref{eq:MR-saturated-overlap}).

Now consider that $\varpi(|\phi\rangle,|0\rangle\,|\,\mu)\leq\mathbb{P}_{\mathcal{B}^{\prime}}(|0\rangle,|1^{\prime}\rangle\,|\,|\psi\rangle)$.
That is, $\varpi(|\phi\rangle,|0\rangle\,|\,\mu)$ is a lower bound
on the probability that a quantum basis measurement in $\mathcal{B}^{\prime}$
has outcome $|0\rangle$ or $|1^{\prime}\rangle$ for a preparation
of $|\psi\rangle$. To see this, consider that any ontic state preparable
by both $|\psi\rangle$ and $|\phi\rangle$ must return a measurement
outcome compatible with both preparations: thus the outcome must be
either $|0\rangle$ or $|1^{\prime}\rangle$. Similarly any ontic
state preparable by both $|\psi\rangle$ and $|0\rangle$ must return
the outcome $|0\rangle$ in such a measurement. A full proof of this
is provided in Appendix~\ref{sec:app:Tripartite-Asymmetric-Overlap}.

Putting these inequalities together, one finds
\begin{eqnarray}
2\alpha^{2} & \leq & \mathbb{P}_{\mathcal{B}^{\prime}}(|0\rangle,|1^{\prime}\rangle\,|\,|\psi\rangle)=\alpha^{2}+\beta^{2}=\alpha^{2}\left(1+2\alpha^{2}\right)
\end{eqnarray}
implying
\begin{equation}
\alpha\geq\frac{1}{\sqrt{2}}.
\end{equation}
But $\alpha$ was defined to be in the range $(0,\frac{1}{\sqrt{2}})$
and so this is a contradiction. 

The following summarises the assumptions which together imply this
contradiction:
\begin{enumerate}
\item The ontology satisfies ESMR or EMMR.
\item The ``macroscopically observable property'' $Q$ has $n>3$ distinguishable
values (requiring that the quantum system has $d\geq n>3$ dimensions).
\item An eigenstate $|0\rangle$ of $Q$ can be chosen such that some quantum
state $|\psi\rangle$ satisfying $|\langle0|\psi\rangle|\in(0,\frac{1}{\sqrt{2}})$
can be prepared.
\item The quantum transformation $U$ and quantum measurement $\mathcal{B}^{\prime}$
described above can be performed.
\end{enumerate}
Assumptions (i-ii) are about possible underlying ontological models,
while (iii-iv) are implications of standard quantum theory. The conclusion
must therefore be that either ESMR/EMMR ontologies are impossible
for $n>3$ distinguishable values, or that quantum theory is not correct.
Quantum theory is therefore incompatible with ESMR or EMMR macro-realism.

\section{Discussion\label{sec:Discussion}}

The theorem in this paper proves that quantum theory is incompatible
with ESMR and EMMR macro-realist ontologies where the macroscopically
observable property has $n>3$ distinguishable values. This is stronger
than the argument from the Leggett-Garg inequalities, which is only
able to rule out EMMR ontologies. Therefore, only SSMR models are
left as possibilities for macro-realist quantum ontologies.

As noted above, no argument is able to rule out all SSMR ontologies
because Bohmian mechanics is an SSMR theory which reproduces all predictions
of quantum theory. It may be possible, however, to produce a theorem
that rules out some subset of SSMR theories.

For example, Bohmian mechanics is a \emph{$\psi$-ontic} theory \cite{Leifer14b}.
That is, each ontic state $\lambda=(\vec{r},|\psi\rangle)$ can only
be accessed by preparing one quantum state, namely $|\psi\rangle$---there
is no ontic overlap between different quantum states. It may therefore
be possible to prove the incompatibility of quantum theory and all SSMR
ontologies that aren't $\psi$-ontic. This, together with the result
presented here, would essentially say that to be macro-realist you
must have an ontology consisting of the full quantum state plus extra
information. Many would consider this a very strong argument against
macro-realism. For example, such models might reasonably be accused of simply artificially adding macro-realism on top of quantum theory, rather than providing an understanding of quantum theory that respects macro-realism. Of course, those sympathetic to Bohmian mechanics would not be swayed by any such arguments, as Bohmian mechanics is already macro-realist.

Papers on the Leggett-Garg argument, including those addressing the
clumsiness loophole \cite{WildeMizel12,Knee+16,HuffmanMizel16}, have
concentrated on $d=2$ dimensional systems. As a result of closely
following the Leggett-Garg assumptions, they are still unable to rule
out any models outside of EMMR. They certainly could not rule out
all ESMR models, since the Kochen-Specker model satisfies ESMR and
exists in $d=2$. By contrast, the theorem in this paper works when
$d\geq n>3$.

The next stage is the development of an experimental test of the improved
theorem. This requires a detailed, noise-tolerant analysis, as any
experiment is unavoidably subject to non-zero noise. The asymmetric
overlap measure, which is used to characterise the different categories
of macro-realism, is an inherently noise-intolerant quantity. To bring
the theorem into an experimentally testable form therefore requires
a more noise-tolerant alternative and suitable adjusted characterisations
of the different categories of macro-realism. This is possible, but
it is not a simple process. Two different approaches for such noise-tolerant
replacements are currently in development and will be the subject
of future papers \cite{Allen17,Hermens17}.

It is interesting to note that experiments based on this result will
be an entirely new avenue for tests of macro-realism. Experimental
tests based on the Leggett-Garg argument will always have certain
features and difficulties in common (such as the clumsiness loophole
noted in section \ref{sec:Loopholes-in-the}). However, since the
approach of this work is so different in character one can expect
the resulting experiments to be similarly different, hopefully avoiding
many of the difficulties common to Leggett-Garg while requiring challenging
new high-precision tests of quantum theory in $d>2$ Hilbert spaces.

As is common in such foundational works, this paper has considered only the case of finite-dimensional quantum systems. It is hoped that an extension to infinite-dimensional cases should be possible. Due to the fact that quantum states become integrals over bases in the infinite case, a further layer of measure-theoretic complexity would likely be required. Actually developing such an extension therefore remains an interesting open problem.

Finally, one should note that in this paper the ``macro'' quantity
$Q$ was taken to correspond to a measurement of basis $\mathcal{B}_{Q}$
in the quantum case. A more general approach might allow $Q$ to correspond
to a POVM measurement instead. That is, for each value $q$ of $Q$
there would be some POVM element $E_{q}$ and the operational eigenstates
$|\psi\rangle$ of $q$ would be those satisfying $\langle\psi|E_{q}|\psi\rangle=1$.
We are confident that the results presented here can be fairly directly
extended to such a case and this would be another interesting avenue
for further work. Such an extension would likely add significant complexity
to the proofs without changing the fundamental ideas, however.

\begin{acknowledgements}
We would like to thank Jonathan Barrett, Dominic C. Horsman, Matty
Hoban, Ciarán Lee, Chris Timpson, Ronnie Hermens, and Andrew Briggs
for insightful discussions as well as anonymous referees for constructive feedback.\protect \\
JMAA is supported by the Engineering and Physical Sciences Research
Council, the NQIT Quantum Hub, the FQXi Large Grant “Thermodynamic
vs information theoretic entropies in probabilistic theories”, and
the Perimeter Institute for Theoretical Physics. Research at Perimeter
Institute is supported by the Government of Canada through the Department
of Innovation, Science and Economic Development Canada and by the
Province of Ontario through the Ministry of Research, Innovation and
Science.\protect \\
OJEM is supported by the Templeton World Charity Foundation.\protect \\
SG is supported by the Engineering and Physical Sciences Research
Council and Trinity College, Oxford (Williams Scholarship).
\end{acknowledgements}

\bibliography{references}

\begin{thebibliography}{10}
\providecommand{\urlprefix}{}
\makeatletter
\providecommand \url  [0]{\begingroup\@sanitize@url \@url }%
\providecommand \@url [1]{\endgroup\@href {#1}{ }}%
\providecommand \eprint [0]{\href }%
\makeatother

\bibitem{Leggett85}
A.~J. Leggett and A.~Garg, \emph{Quantum Mechanics versus Macroscopic Realism:
  {I}s the Flux There when Nobody Looks?} Physical Review Letters
  \textbf{54}(9), 857--860 (1985).
  \urlprefix\url{http://dx.doi.org/10.1103/PhysRevLett.54.857}

\bibitem{Ballentine87}
L.~E. Ballentine, \emph{Realism and Quantum Flux Tunneling}. Physical Review
  Letters \textbf{59}(14), 1493 (1987).
  \urlprefix\url{http://dx.doi.org/10.1103/PhysRevLett.59.1493}

\bibitem{Leggett87}
A.~J. Leggett and A.~Garg, \emph{Comment on ``{R}ealism and Quantum Flux
  Tunneling''}. Physical Review Letters \textbf{59}(14), 1621 (1987).
  \urlprefix\url{http://dx.doi.org/10.1103/PhysRevLett.59.1621}

\bibitem{Leggett88}
A.~J. Leggett, \emph{Experimental Approaches to the Quantum Measurement
  Paradox}. Foundations of Physics \textbf{18}(9), 939--952 (1988).
  \urlprefix\url{http://dx.doi.org/10.1007/BF01855943}

\bibitem{Leggett02a}
A.~J. Leggett, \emph{Probing Quantum Mechanics Towards the Everyday World:
  {W}here do we Stand?} Physica Scripta \textbf{2002}(T102), 69 (2002).
  \urlprefix\url{http://dx.doi.org/10.1238/Physica.Topical.102a00069}

\bibitem{Leggett02b}
A.~J. Leggett, \emph{Testing the Limits of Quantum Mechanics: {M}otivation,
  State of Play, Prospects}. Journal of Physics: Condensed Matter
  \textbf{14}(15), R415 (2002).
  \urlprefix\url{http://dx.doi.org/10.1088/0953-8984/14/15/201}

\bibitem{KoflerBrukner13}
J.~Kofler and \v{C}aslav Brukner, \emph{Condition for Macroscopic Realism
  Beyond the {L}eggett-{G}arg Inequalities}. Physical Review A \textbf{87}(5),
  052115 (2013). \eprint {http://arxiv.org/abs/1207.3666} {arXiv:1207.3666
  [quant-ph]} , \urlprefix\url{http://dx.doi.org/10.1103/PhysRevA.87.052115}

\bibitem{Maroney14}
O.~J.~E. Maroney and C.~G. Timpson, \emph{Quantum- vs. Macro- Realism: {W}hat
  does the {L}eggett-{G}arg Inequality Actually Test?} British Journal for the
  Philosophy of Science \textbf{In Production} (2017). \eprint
  {http://arxiv.org/abs/1412.6139} {arXiv:1412.6139 [quant-ph]}

\bibitem{Emary14}
C.~Emary, N.~Lambert, and F.~Nori, \emph{{L}eggett-{G}arg Inequalities}.
  Reports on Progress in Physics \textbf{77}(1), 016001 (2014). \eprint
  {http://arxiv.org/abs/1304.5133} {arXiv:1304.5133 [quant-ph]} ,
  \urlprefix\url{http://dx.doi.org/10.1088/0034-4885/77/1/016001}

\bibitem{Knee+16}
G.~C. Knee, K.~Kakuyanagi, M.-C. Yeh, Y.~Matsuzaki, H.~Toida, H.~Yamaguchi,
  S.~Saito, A.~J. Leggett, and W.~J. Munro, \emph{A Strict Experimental Test of
  Macroscopic Realism in a Superconducting Flux Qubit}. Nature Communications
  \textbf{7}, 13253 (2016). \eprint {http://arxiv.org/abs/1601.03728}
  {arXiv:1601.03728 [quant-ph]} ,
  \urlprefix\url{http://dx.doi.org/10.1038/ncomms13253}

\bibitem{HuffmanMizel16}
E.~Huffman and A.~Mizel, \emph{Violation of Noninvasive Macrorealism by a
  Superconducting Qubit: {I}mplementation of a {L}eggett-{G}arg Test that
  Addresses the Clumsiness Loophole}. Physical Review A \textbf{95}(3), 032131
  (2017). \eprint {http://arxiv.org/abs/1609.05957} {arXiv:1609.05957
  [quant-ph]} , \urlprefix\url{https://doi.org/10.1103/PhysRevA.95.032131}

\bibitem{Zhou+15}
Z.-Q. Zhou, S.~F. Huelga, C.-F. Li, and G.-C. Guo, \emph{Experimental Detection
  of Quantum Coherent Evolution through the Violation of {L}eggett-{G}arg-Type
  Inequalities}. Physical Review Letters \textbf{115}(11), 113002 (2015).
  \eprint {http://arxiv.org/abs/1209.2176} {arXiv:1209.2176 [quant-ph]} ,
  \urlprefix\url{http://dx.doi.org/10.1103/PhysRevLett.115.113002}

\bibitem{ClementeKofler15}
L.~Clemente and J.~Kofler, \emph{Necessary and Sufficient Conditions for
  Macroscopic Realism from Quantum Mechanics}. Physical Review A
  \textbf{91}(6), 062103 (2015). \eprint {http://arxiv.org/abs/1501.07517}
  {arXiv:1501.07517 [quant-ph]} ,
  \urlprefix\url{http://dx.doi.org/10.1103/PhysRevA.91.062103}

\bibitem{ClementeKofler16}
L.~Clemente and J.~Kofler, \emph{No {F}ine Theorem for Macrorealism:
  {L}imitations of the {L}eggett-{G}arg Inequality}. Physical Review Letters
  \textbf{116}(15), 150401 (2016). \eprint {http://arxiv.org/abs/1509.00348}
  {arXiv:1509.00348 [quant-ph]} ,
  \urlprefix\url{http://dx.doi.org/10.1103/PhysRevLett.116.150401}

\bibitem{Bacciagaluppi15}
G.~Bacciagaluppi, \emph{{L}eggett-{G}arg Inequalities, Pilot Waves and
  Contextuality}. International Journal of Quantum Foundations \textbf{1},
  1--17 (2015). \eprint {http://arxiv.org/abs/1409.4104} {arXiv:1409.4104
  [quant-ph]}

\bibitem{WildeMizel12}
M.~M. Wilde and A.~Mizel, \emph{Addressing the Clumsiness Loophole in a
  {L}eggett-{G}arg Test of Macrorealism}. Foundations of Physics
  \textbf{42}(2), 256--265 (2012). \eprint {http://arxiv.org/abs/1001.1777}
  {arXiv:1001.1777 [quant-ph]} ,
  \urlprefix\url{http://dx.doi.org/10.1007/s10701-011-9598-4}

\bibitem{Bell87}
J.~S. Bell, \emph{Speakable and Unspeakable in Quantum Mechanics: Collected
  Papers on Quantum Philosophy}. Cambridge University Press, Cambridge (1987).
  \urlprefix\url{http://dx.doi.org/10.1017/CBO9780511815676}

\bibitem{Pusey12}
M.~F. Pusey, J.~Barrett, and T.~Rudolph, \emph{On the Reality of the Quantum
  State}. Nature Physics \textbf{8}(6), 475--478 (2012). \eprint
  {http://arxiv.org/abs/1111.3328} {arXiv:1111.3328 [quant-ph]} ,
  \urlprefix\url{http://dx.doi.org/10.1038/nphys2309}

\bibitem{Hardy04}
L.~Hardy, \emph{Quantum Ontological Excess Baggage}. Studies in History and
  Philosophy of Science Part B: Studies in History and Philosophy of Modern
  Physics \textbf{35}(2), 267--276 (2004).
  \urlprefix\url{http://dx.doi.org/10.1016/j.shpsb.2003.12.001}

\bibitem{Leifer14b}
M.~S. Leifer, \emph{Is the Quantum State Real? {A}n Extended Review of
  $\psi$-ontology Theorems}. Quanta \textbf{3}(1), 67--155 (2014). \eprint
  {http://arxiv.org/abs/1409.1570} {arXiv:1409.1570 [quant-ph]} ,
  \urlprefix\url{http://dx.doi.org/10.12743/quanta.v3i1.22}

\bibitem{Bohm95}
D.~Bohm and B.~J. Hiley, \emph{The Undivided Universe: {A}n Ontological
  Interpretation of Quantum Theory}. Routledge (1995)

\bibitem{Durr09}
D.~D\"{u}rr and S.~Teufel, \emph{{B}ohmian Mechanics: {T}he Physics and
  Mathematics of Quantum Theory}. Springer, Berlin (2009).
  \urlprefix\url{http://dx.doi.org/10.1007/b99978}

\bibitem{Bohm52a}
D.~Bohm, \emph{A Suggested Interpretation of the Quantum Theory in Terms of
  ``Hidden'' Variables. {I}}. Physical Review \textbf{85}(2), 166--179 (1952).
  \urlprefix\url{http://dx.doi.org/10.1103/PhysRev.85.166}

\bibitem{Bohm52b}
D.~Bohm, \emph{A Suggested Interpretation of the Quantum Theory in Terms of
  ``Hidden'' Variables. {II}}. Physical Review \textbf{85}(2), 180--193 (1952).
  \urlprefix\url{http://dx.doi.org/10.1103/PhysRev.85.166}

\bibitem{deBroglie27}
L.~{de Broglie}, \emph{The New Dynamics of Quanta}. In \emph{Quantum Theory at
  the Crossroads: Reconsidering the 1927 Solvay Conference}, eds.
  G.~Bacciagaluppi and A.~Valentini, pp. 373--406, Cambridge University Press,
  Cambridge (2009), ISBN 9780521814218. \eprint
  {http://arxiv.org/abs/quant-ph/0609184} {arXiv:quant-ph/0609184 [quant-ph]}

\bibitem{Allen16}
J.-M.~A. Allen, \emph{Quantum Superpositions cannot be Epistemic}. Quantum
  Studies: {M}athematics and Foundations \textbf{3}(2), 161--177 (2016).
  \eprint {http://arxiv.org/abs/1501.05969} {arXiv:1501.05969 [quant-ph]} ,
  \urlprefix\url{http://dx.doi.org/10.1007/s40509-015-0066-2}

\bibitem{Harrigan07}
N.~Harrigan and T.~Rudolph, \emph{Ontological Models and the Interpretation of
  Contextuality} (2007). \eprint {http://arxiv.org/abs/0709.4266}
  {arXiv:0709.4266 [quant-ph]}

\bibitem{Harrigan10}
N.~Harrigan and R.~W. Spekkens, \emph{{E}instein, Incompleteness, and the
  Epistemic View of Quantum States}. Foundatations of Physics \textbf{40}(2),
  125 (2010). \eprint {http://arxiv.org/abs/0706.2661} {arXiv:0706.2661
  [quant-ph]} , \urlprefix\url{http://dx.doi.org/10.1007/s10701-009-9347-0}

\bibitem{Beltrametti95}
E.~G. Beltrametti and S.~Bugajski, \emph{A Classical Extension of Quantum
  Mechanics}. Journal of Physics A \textbf{28}(12), 3329 (1995).
  \urlprefix\url{http://dx.doi.org/10.1088/0305-4470/28/12/007}

\bibitem{Ballentine14}
L.~Ballentine, \emph{Ontological Models in Quantum Mechanics: {W}hat do they
  Tell Us?} (2014). \eprint {http://arxiv.org/abs/1402.5689} {arXiv:1402.5689
  [quant-ph]}

\bibitem{Leifer13b}
M.~S. Leifer and O.~J.~E. Maroney, \emph{Maximally Epistemic Interpretations of
  the Quantum State and Contextuality}. Physical Review Letters
  \textbf{110}(12), 120401 (2013). \eprint {http://arxiv.org/abs/1208.5132}
  {arXiv:1208.5132 [quant-ph]} ,
  \urlprefix\url{http://dx.doi.org/10.1103/PhysRevLett.110.120401}

\bibitem{Maroney12a}
O.~J.~E. Maroney, \emph{How Statistical are Quantum States?} (2012). \eprint
  {http://arxiv.org/abs/1207.6906} {arXiv:1207.6906 [quant-ph]}

\bibitem{Caves02}
C.~M. Caves, C.~A. Fuchs, and R.~Schack, \emph{Conditions for Compatibility of
  Quantum-State Assignments}. Physical Review A \textbf{66}(6), 062111 (2002).
  \eprint {http://arxiv.org/abs/quant-ph/0206110} {arXiv:quant-ph/0206110
  [quant-ph]} , \urlprefix\url{http://dx.doi.org/10.1103/PhysRevA.66.062111}

\bibitem{Barrett14}
J.~Barrett, E.~G. Cavalcanti, R.~Lal, and O.~J.~E. Maroney, \emph{No
  $\psi$-Epistemic Model Can Fully Explain the Indistinguishability of Quantum
  States}. Physical Review Letters \textbf{112}(25), 250403 (2014). \eprint
  {http://arxiv.org/abs/1310.8302} {arXiv:1310.8302 [quant-ph]} ,
  \urlprefix\url{http://dx.doi.org/10.1103/PhysRevLett.112.250403}

\bibitem{Kochen67}
S.~Kochen and E.~P. Specker, \emph{The Problem of Hidden Variables in Quantum
  Mechanics}. Journal of Mathematics and Mechanics \textbf{17}, 59--87 (1967)

\bibitem{Allen17}
J.-M.~A. Allen, \emph{Reality, Causality, and Quantum Theory} (2017).
  Forthcoming doctoral thesis to be made available on the arXiv.

\bibitem{Hermens17}
R.~Hermens and O.~J.~E. Maroney, \emph{Constraints on Macroscopic Realism
  Without Assuming Non-invasive Measurability} (2017). \eprint
  {http://arxiv.org/abs/1706.01641} {arXiv:1706.01641 [quant-ph]}

\end{thebibliography}

\newpage
\appendix

\section{A Useful Lemma} \label{sec:app:A-Useful-Lemma}

Here, a lemma is proved that will be useful in the following proofs.
\begin{lem*}
Let $f(\lambda):\Lambda\rightarrow[0,1]$ be any measurable function
from an ontic state space onto the unit interval. Let $\mu$ be any
probability measure over $\Lambda$. If
\begin{equation}
\int_{\Lambda}\mathrm{d}\mu(\lambda)\,f(\lambda)=1\label{eq:lemma-antecedent}
\end{equation}
then it follows that
\begin{equation}
\mu\left(\ker(1-f)\right)=1\label{eq:lemma-conclusion}
\end{equation}
where the kernel of a measurable function $g:\Lambda\rightarrow[0,1]$
is defined
\begin{equation}
\ker g\eqdef\{\lambda^{\prime}\in\Lambda\,:\,g(\lambda^\prime)=0\}\label{eq:kernel-definition}
\end{equation}
and is necessarily a measurable set for measurable function $g$.\end{lem*}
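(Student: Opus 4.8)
The plan is to reduce the claim to the standard measure-theoretic fact that a nonnegative measurable function whose integral vanishes must be zero $\mu$-almost everywhere. First I would introduce the auxiliary function $g \eqdef 1 - f$, which is measurable as the difference of measurable functions and, crucially, satisfies $g(\lambda) \ge 0$ for every $\lambda \in \Lambda$ because $f$ takes values in $[0,1]$. By linearity of the integral and the fact that $\mu$ is a probability measure ($\mu(\Lambda)=1$), the hypothesis Eq.~(\ref{eq:lemma-antecedent}) gives
\[
\int_{\Lambda}\mathrm{d}\mu(\lambda)\,g(\lambda) = \mu(\Lambda) - \int_{\Lambda}\mathrm{d}\mu(\lambda)\,f(\lambda) = 1 - 1 = 0 .
\]
Since $\ker(1-f)=\ker g = \{\lambda : g(\lambda)=0\}$, the conclusion Eq.~(\ref{eq:lemma-conclusion}) is precisely the statement that $\mu(\{g=0\})=1$, so the task becomes deducing this from $g\ge 0$ and $\int g\,\mathrm{d}\mu = 0$.

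Next I would exhaust the ``bad'' set $\{g>0\}$ by the increasing sequence of measurable sets $A_{n}\eqdef\{\lambda\in\Lambda : g(\lambda)\ge \tfrac{1}{n}\}$ for $n\in\mathbb{N}$. On $A_{n}$ the integrand is bounded below by $\tfrac{1}{n}$, so a Markov-type estimate gives
\[
0 = \int_{\Lambda}\mathrm{d}\mu(\lambda)\,g(\lambda) \ge \int_{A_{n}}\mathrm{d}\mu(\lambda)\,g(\lambda) \ge \frac{1}{n}\,\mu(A_{n}),
\]
where the first inequality discards the integral over the complement of $A_{n}$, which is legitimate precisely because $g\ge 0$ there. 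Hence $\mu(A_{n})=0$ for every $n$.

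Finally, since $\{g>0\}=\bigcup_{n}A_{n}$, countable subadditivity of $\mu$ yields $\mu(\{g>0\})\le\sum_{n}\mu(A_{n})=0$, and taking complements gives $\mu(\ker g)=\mu(\{g=0\})=1-\mu(\{g>0\})=1$, which is Eq.~(\ref{eq:lemma-conclusion}). I do not expect a genuine obstacle, as this is routine measure theory; the one point requiring care is the discarding step in the Markov estimate, whose validity rests entirely on the hypothesis $f\le 1$ (equivalently $g\ge 0$). Without it, positive and negative contributions to the integral could cancel and the conclusion would fail, so it is worth flagging that this is exactly where the range restriction on $f$ is used. The measurability of $\ker g$ is already asserted in the statement, so no separate measurability argument is needed.
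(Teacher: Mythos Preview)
Your proof is correct. Both you and the paper introduce $g=1-f$ (the paper calls it $\bar f$), but the arguments then diverge in style. The paper splits $\int_\Lambda f\,\mathrm{d}\mu$ over $\ker\bar f$ and its complement, obtaining $1=\mu(\ker\bar f)+\int_{\Lambda\setminus\ker\bar f}f\,\mathrm{d}\mu$, and then argues by contradiction: if the second term were nonzero one would get $\mu(\ker\bar f)+\mu(\Lambda\setminus\ker\bar f)>1$, using that $f<1$ pointwise on $\Lambda\setminus\ker\bar f$ forces a strict inequality $\int_{\Lambda\setminus\ker\bar f}f\,\mathrm{d}\mu<\mu(\Lambda\setminus\ker\bar f)$. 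You instead reduce directly to the textbook fact that a nonnegative function with vanishing integral is zero $\mu$-a.e., and prove that cleanly via the Markov estimate $\mu(\{g\ge 1/n\})\le n\int g\,\mathrm{d}\mu=0$ followed by countable subadditivity. Your route is slightly more self-contained: the paper's strict-inequality step (``$f<1$ on a set of positive measure implies $\int f<\mu$ of that set'') is itself equivalent to the nonnegative-integrand fact you establish explicitly, so your argument spells out the mechanism that the paper invokes tacitly. Either way the content is the same elementary measure theory, and your identification of $f\le 1$ (i.e.\ $g\ge 0$) as the hinge on which the discarding step turns is exactly right.
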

\begin{proof}
Let $\bar{f}(\lambda)\eqdef1-f(\lambda)$, which is a measurable function
from $\Lambda$ to $[0,1]$. As measurable functions, the kernels
of both $f$ and $\bar{f}$ are measurable sets in $\Lambda$.

Equation~(\ref{eq:lemma-antecedent}) implies that
\begin{eqnarray}
1 & = & \int_{\ker\bar{f}}\mathrm{d}\mu(\lambda)\,f(\lambda)+\int_{\Lambda\backslash\ker\bar{f}}\mathrm{d}\mu(\lambda)\,f(\lambda)\\
 & = & \mu(\ker\bar{f})+\int_{\Lambda\backslash\ker\bar{f}}\mathrm{d}\mu(\lambda)\,f(\lambda)\label{eq:lemma-expanded}
\end{eqnarray}
since if $\lambda\in\ker\bar{f}$ then $f(\lambda)=1$. 

Consider the second term in Eq.~(\ref{eq:lemma-expanded}); there
are two possibilities. First, the term could be zero (if, for example,
the only subsets of $\Lambda\backslash\ker\bar{f}$ where $f(\lambda)>0$
are measure-zero subsets according to $\mu$). If it's not zero then,
since for all $\lambda\in\Lambda\backslash\ker\bar{f}$ $f(\lambda)<1$
then $\int_{\Lambda\backslash\ker\bar{f}}\mathrm{d}\mu(\lambda)\,f(\lambda)<\mu(\Lambda\backslash\ker\bar{f})$.
In the latter case this would imply
\begin{eqnarray}
\mu(\ker\bar{f})+\mu(\Lambda\backslash\ker\bar{f}) & > & 1\\
\mu(\Lambda) & > & 1
\end{eqnarray}
which is a contradiction as $\mu(\Lambda)=1$ by definition of a probability
measure. The only option is therefore for this term to vanish, in
which case $1=\mu(\ker\bar{f})=\mu\left(\ker(1-f)\right)$ as desired.
\end{proof}

\section{Bounding Asymmetric Overlaps\label{sec:app:Bounding-Asymmetric-Overlaps}}

The main text quotes a bound, Eq.~(\ref{eq:overlap-bound}), for
the asymmetric overlap together with a sketch of the proof. This will
now be proved fully.

The spirit of the proof is that if $\lambda$ can be obtained by preparing
some $\nu\in\Delta_{|\phi\rangle}$ then it should also return the
outcome $|\phi\rangle$ with certainty in any measurement where that
is an option (there are exceptions which make the proof more difficult
than this). Thus, the probability of preparing $\mu$ and getting
a $\lambda$ which then returns the outcome $|\phi\rangle$ in a measurement
is at least the probability of preparing $\mu$ and getting a $\lambda$
which is accessible from some $\nu\in\Delta_{|\phi\rangle}$, which
is a paraphrase of the desired result.

To prove Eq.~(\ref{eq:overlap-bound}), consider preparing $|\phi\rangle$
according to $\nu\in\Delta_{|\phi\rangle}$ and then performing some
quantum measurement $M_{\phi}\ni|\phi\rangle$. For the ontological
model to reproduce quantum probabilities, as in Eq.~(\ref{eq:ontological-model-quantum-probability}),
it is therefore required that
\begin{equation}
\int_{\Lambda}\mathrm{d}\nu(\lambda)\mathbb{P}_{M_{\phi}}(|\phi\rangle\,|\,\lambda)=1.\label{eq:self-measure-ontic-probability}
\end{equation}

It is tempting to conclude from this that for every $\lambda$ in
the support of $\nu$ then $\mathbb{P}_{M_{\phi}}(|\phi\rangle\,|\,\lambda)=1$.
However, this is not generally possible because it is not generally
possible to define a support for the measure $\nu$. Therefore, a
slightly different proof is required.

For notational convenience, define $g(\lambda)\eqdef\mathbb{P}_{M_{\phi}}(|\phi\rangle\,|\,\lambda)$
as a measurable function from $\Lambda$ to $[0,1]$. It follows from
Eq.~(\ref{eq:self-measure-ontic-probability}) and the lemma of Appendix~\ref{sec:app:A-Useful-Lemma}
that $\nu(\ker(1-g))=1$. That is, $g(\lambda)=1$ almost everywhere
according to $\nu$.

Now consider preparing $|\psi\rangle$ according to $\mu$ and then
measuring with the same $M_{\phi}\ni|\phi\rangle$, by Eq.~(\ref{eq:Ontological-model-probability})
and the Born rule
\begin{equation}
|\langle\phi|\psi\rangle|^{2}=\int_{\Lambda}\mathrm{d}\mu(\lambda)\mathbb{P}_{M_{\phi}}(|\phi\rangle\,|\,\lambda).
\end{equation}
Restricting the integral to $\ker(1-g)$ where, for every $\lambda$,
$g(\lambda)=1$ one finds
\begin{eqnarray}
|\langle\phi|\psi\rangle|^{2} & \ge & \int_{\ker(1-g)}\mathrm{d}\mu(\lambda)\,g(\lambda)\\
 & = & \mu\left(\ker(1-g)\right).
\end{eqnarray}
Thus, $|\langle\phi|\psi\rangle|^{2}$ bounds $\mu(\Omega)$ from
above for some $\Omega\subseteq\Lambda$ for which $\nu(\Omega)=1$.
It must therefore bound the smallest such $\mu(\Omega)$ from above:
\begin{equation}
|\langle\phi|\psi\rangle|^{2}\geq\mu\left(\ker(1-g)\right)\geq\inf_{\Omega\subseteq\Lambda:\nu(\Omega)=1}\mu(\Omega)\eqdef\varpi(\nu\,|\,\mu)
\end{equation}
which proves Eq.~(\ref{eq:overlap-bound}).

\section{Asymmetric Overlaps and Transformations} \label{sec:app:Asymmetric-Overlaps-and-Transformations}

A similar manoeuvre to Appendix~\ref{sec:app:Bounding-Asymmetric-Overlaps}
can be used to prove Eq.~(\ref{eq:overlap-transformation-bound}).
The gist of this proof is given in the text while the full proof is
below.

It suffices to prove that for any measurable $\Omega^{\prime}\subseteq\Lambda$
such that $\nu^{\prime}(\Omega^{\prime})=1,\forall\nu^{\prime}\in\Delta_{U|\phi\rangle}$
then there exists some measurable $\Omega\subseteq\Lambda$ such that
$\nu(\Omega)=1,\forall\nu\in\Delta_{|\phi\rangle}$ and $\mu^{\prime}(\Omega^{\prime})\geq\mu(\Omega)$.
Since $\mu^{\prime}$ is the measure obtained by following $\mu$
with $\gamma$ then by Eq.~(\ref{eq:Ontological-model-transformation})
\begin{equation}
\mu^{\prime}(\Omega^{\prime})\eqdef\int_{\Lambda}\mathrm{d}\mu(\lambda)\gamma(\Omega^{\prime}|\lambda)\label{eq:proof-transform-definition}
\end{equation}
where $\Omega^{\prime}$ is any such subset.

For any $\nu\in\Delta_{|\phi\rangle}$ there is some $\nu^{\prime}\in\Delta_{U|\phi\rangle}$
such that $\nu^{\prime}$ is obtained by performing $\nu$ followed
by $\gamma$. Therefore, by Eq.~(\ref{eq:Ontological-model-transformation})
\begin{equation}
1=\nu^{\prime}(\Omega^{\prime})=\int_{\Lambda}\mathrm{d}\nu(\lambda)\gamma(\Omega^{\prime}|\lambda)\label{eq:self-transform-probability}
\end{equation}
Viewing $f(\lambda)\eqdef\gamma(\Omega^{\prime}|\lambda)$ as a measurable
function from $\Lambda$ to $[0,1]$ and using the lemma of Appendix~\ref{sec:app:A-Useful-Lemma},
Eq.~(\ref{eq:self-transform-probability}) implies $\nu(\ker(1-f))=1$.
Note that this holds for every $\nu\in\Delta_{|\phi\rangle}$.

Therefore, let $\Omega=\ker(1-f)\subseteq\Lambda$ be a subset for
which every $\nu(\Omega)=1,\forall\nu\in\Delta_{|\phi\rangle}$ and
return to Eq.~(\ref{eq:proof-transform-definition}). One finds
\begin{eqnarray}
\mu^{\prime}(\Omega^{\prime}) & \geq & \int_{\ker(1-f)}\mathrm{d}\mu(\lambda)\,f(\lambda)\\
 & = & \mu(\Omega)
\end{eqnarray}
which proves the desired statement and thereby Eq.~(\ref{eq:overlap-transformation-bound}).

\section{Anti-Distinguishability and Asymmetric Overlap} \label{sec:app:Anti-Distinguishability-and-Asymmmetric-Overlap}

It is claimed (and roughly motivated) in the text that if $\{|\psi\rangle,|\phi\rangle,|0\rangle\}$
is an anti-distinguishable triple then Eq.~(\ref{eq:tripartite-asymmetric-bound})
must hold with equality. This will now be proved fully.

Recall that $\{|\psi\rangle,|\phi\rangle,|0\rangle\}$ is an anti-distinguishable
triple if and only if there is some quantum measurement $M$ with
three outcomes $E_{\text{\textlnot}\psi},E_{\text{\textlnot}\phi},E_{\text{\textlnot}0}$
such that the outcome of getting $E_{\text{\textlnot}\psi}$ from
a system prepared in state $|\psi\rangle$ is zero and similarly for
the other state/outcome pairs. By Eq.~(\ref{eq:ontological-model-quantum-probability})
it therefore follows that for all $\mu\in\Delta_{|\psi\rangle},\nu\in\Delta_{|\phi\rangle},\chi\in\Delta_{|0\rangle}$
\begin{eqnarray}
\int_{\Lambda}\mathrm{d}\mu(\lambda)\mathbb{P}_{M}(E_{\text{\textlnot}\psi}|\,\lambda) & = & 0\label{eq:antidistinguishable-measurement-1}\\
\int_{\Lambda}\mathrm{d}\nu(\lambda)\mathbb{P}_{M}(E_{\text{\textlnot}\phi}|\,\lambda) & = & 0\label{eq:antidistinguishable-measurement-2}\\
\int_{\Lambda}\mathrm{d}\chi(\lambda)\mathbb{P}_{M}(E_{\text{\textlnot}0}|\,\lambda) & = & 0.\label{eq:antidistinguishable-measurement-3}
\end{eqnarray}

To prove that Eq.~(\ref{eq:tripartite-asymmetric-bound}) holds with
equality it suffices to show that given any measurable $\Omega\subseteq\Lambda$
for which $\nu(\Omega)=\chi(\Omega)=1$ for all $\nu\in\Delta_{|\phi\rangle},\chi\in\Delta_{|0\rangle}$
there exists some measurable $\Omega^{\prime},\Omega^{\prime\prime}\subseteq\Lambda$
for which $\nu(\Omega^{\prime})=\chi(\Omega^{\prime\prime})=1$ for
all $\nu\in\Delta_{|\phi\rangle},\chi\in\Delta_{|0\rangle}$ for which
\begin{equation}
\mu(\Omega)\geq\mu(\Omega^{\prime})+\mu(\Omega^{\prime\prime}).\label{eq:antidistinguishability-equality-to-prove}
\end{equation}
This, together with Eq.~(\ref{eq:tripartite-asymmetric-bound}) itself,
would prove the desired result since the right-hand side bounds $\varpi(|\phi\rangle|\mu)+\varpi(|0\rangle|\mu)$
from above.

To prove that Eq.~(\ref{eq:antidistinguishability-equality-to-prove})
holds define the following measurable functions from $\Lambda$ to
$[0,1]$
\begin{eqnarray}
g_{\psi}(\lambda) & \eqdef & \mathbb{P}_{M}(E_{\text{\textlnot}\phi}|\,\lambda)+\mathbb{P}_{M}(E_{\text{\textlnot}0}|\,\lambda)\\
g_{\phi}(\lambda) & \eqdef & \mathbb{P}_{M}(E_{\text{\textlnot}\psi}|\,\lambda)+\mathbb{P}_{M}(E_{\text{\textlnot}0}|\,\lambda)\\
g_{0}(\lambda) & \eqdef & \mathbb{P}_{M}(E_{\text{\textlnot}\psi}|\,\lambda)+\mathbb{P}_{M}(E_{\text{\textlnot}\phi}|\,\lambda).
\end{eqnarray}
Using the fact that, for any $\lambda\in\Lambda$, the sum of probabilities
of outcomes for any measurement must be unity it follows that $\mathbb{P}_{M}(E_{\text{\textlnot}\psi}|\lambda)=1-g_{\psi}(\lambda)$
and similarly for $|\phi\rangle$ and $|0\rangle$. Therefore Eqs.~(\ref{eq:antidistinguishable-measurement-1},\ref{eq:antidistinguishable-measurement-2},\ref{eq:antidistinguishable-measurement-3})
are equivalent to 
\begin{eqnarray}
\int_{\Lambda}\mathrm{d}\mu(\lambda)\,g_{\psi}(\lambda) & = & 1\\
\int_{\Lambda}\mathrm{d}\nu(\lambda)\,g_{\phi}(\lambda) & = & 1\\
\int_{\Lambda}\mathrm{d}\chi(\lambda)\,g_{0}(\lambda) & = & 1.
\end{eqnarray}
By the lemma in Appendix~\ref{sec:app:A-Useful-Lemma} it immediately follows
that $\nu(\ker(1-g_{\phi}))=\chi(\ker(1-g_{0}))=1$ where, recall,
$\nu$ and $\chi$ are arbitrary measures from $\Delta_{|\phi\rangle}$
and $\Delta_{|0\rangle}$ respectively.

With these definitions, consider $\mu(\Omega)$ for any measurable
$\Omega\subseteq\Lambda$ for which $\nu(\Omega)=\chi(\Omega)=1$
for all $\nu\in\Delta_{|\phi\rangle},\chi\in\Delta_{|0\rangle}$
\begin{eqnarray}
\mu(\Omega) & = & \int_{\Omega}\mathrm{d}\mu(\lambda)\\
 & = & \int_{\Omega}\mathrm{d}\mu(\lambda)\Bigl(\mathbb{P}_{M}(E_{\text{\textlnot}\psi}|\lambda)+g_{\psi}(\lambda)\Bigr)\\
 & = & \int_{\Omega}\mathrm{d}\mu(\lambda)\,g_{\psi}(\lambda)+\int_{\Omega}\mathrm{d}\mu(\lambda)\mathbb{P}_{M}(E_{\text{\textlnot}\psi}|\lambda)
\end{eqnarray}
By the definitions of $g_\psi$, $g_\phi$, and $g_0$ this equals
\begin{eqnarray}
\mu(\Omega) & = & \int_{\Omega}\mathrm{d}\mu(\lambda)\,g_{\phi}(\lambda)+\int_{\Omega}\mathrm{d}\mu(\lambda)\,g_{0}(\lambda)\nonumber \\
 &  & -\int_{\Omega}\mathrm{d}\mu(\lambda)\,\mathbb{P}_{M}(E_{\text{\textlnot}\psi}|\lambda)\\
 & = & \int_{\Omega}\mathrm{d}\mu(\lambda)\,g_{\phi}(\lambda)+\int_{\Omega}\mathrm{d}\mu(\lambda)\,g_{0}(\lambda)
\end{eqnarray}
where the last term vanished because $0\leq\int_{\Omega}\mathrm{d}\mu(\lambda)\mathbb{P}_{M}(E_{\text{\textlnot}\psi}|\lambda)\leq\int_{\Lambda}\mathrm{d}\mu(\lambda)\mathbb{P}_{M}(E_{\text{\textlnot}\psi}|\lambda)=0$
by Eq.~(\ref{eq:antidistinguishable-measurement-1}). Continuing
\begin{eqnarray}
\mu(\Omega) & \geq & \int_{\Omega\cap\ker(1-g_{\phi})}\mathrm{d}\mu(\lambda)\,g_{\phi}(\lambda)\nonumber \\
 &  & +\int_{\Omega\cap\ker(1-g_{0})}\mathrm{d}\mu(\lambda)\,g_{0}(\lambda)\\
 & = & \mu\left(\Omega\cap\ker(1-g_{\phi})\right)+\mu\left(\Omega\cap\ker(1-g_{0})\right)
\end{eqnarray}
by restricting the domains of integration and recalling that $\forall\lambda\in\ker(1-g_{\phi})$,
$g_{\phi}(\lambda)=1$ (and similarly for $g_{0}$). Note that as
both $\Omega$ and $\ker(1-g_{\phi})$ are measure-one according to
any $\nu\in\Delta_{|\phi\rangle}$, it follows that their intersection
also satisfies $\nu\left(\Omega\cap\ker(1-g_{\phi})\right)=1$. Similarly,
$\chi\left(\Omega\cap\ker(1-g_{0})\right)=1$. Thus what has been
proved is that given any measurable $\Omega$ such that $\nu(\Omega)=\chi(\Omega)=1$
for all $\nu\in\Delta_{|\phi\rangle},\chi\in\Delta_{|0\rangle}$ there
exist measurable sets $\Omega^{\prime}\eqdef\Omega\cap\ker(1-g_{\phi})$
and $\Omega^{\prime\prime}\eqdef\Omega\cap\ker(1-g_{0})$ satisfying
$\nu(\Omega^{\prime})=\chi(\Omega^{\prime\prime})=1$ for all $\nu\in\Delta_{|\phi\rangle},\chi\in\Delta_{|0\rangle}$
for which
\begin{equation}
\mu(\Omega)\geq\mu(\Omega^{\prime})+\mu(\Omega^{\prime\prime}).
\end{equation}
This is exactly what was to be proved.

\section{ESMR, EMMR, and Asymmetric Overlap} \label{sec:app:ESMR,-EMMR,-and-asymmetric-overlap}

Here it is proved that for any state $|\psi\rangle\in\mathcal{P}(\mathcal{H})$
and any eigenstate $|0\rangle\in\mathcal{B}_{Q}$ for an ESMR/EMMR
quantity $Q$ then Eq.~(\ref{eq:MR-saturated-overlap}) must hold.
This follows because in both ESMR and EMMR ontological models every
ontic state that can be prepared can be obtained by preparing some
operational eigenstate of $Q$.

This is easiest to state precisely if the asymmetric overlap is extended
further to include arbitrary $n$-partite sets of states. That is,
if $\mathcal{S}$ is a set of $n-1$ quantum states and $\mu\in\Delta_{|\psi\rangle}$
is a preparation measure for quantum state $|\psi\rangle$ then the
$n$-partite asymmetric overlap $\varpi(\mathcal{S}|\mu)$ is the
probability that the ontic state obtained by sampling $\mu$ could
have been obtained by preparing some state in $\mathcal{S}$. Mathematically,
\begin{equation} \label{eq:multipartite-asymmetric}
\varpi(\mathcal{S}\,|\,\mu)\eqdef\inf\left\{ \mu(\Omega)\;:\;\Omega\subseteq\Lambda,\forall|\phi\rangle\in\mathcal{S},\forall\nu\in\Delta_{|\phi\rangle},\nu(\Omega)=1\right\} .
\end{equation}
As with the tripartite overlap, Boole's inequality gives the bound
\begin{equation}
\varpi(\mathcal{S}\,|\,\mu)\leq\sum_{|\phi\rangle\in\mathcal{S}}\varpi(|\phi\rangle\,|\,\mu).\label{eq:multipartite-asymmetric-booles}
\end{equation}

From Eq.~(\ref{eq:multipartite-asymmetric}), it follows immediately that for any ESMR or EMMR ontological model,
every preparation $\mu\in\Delta_{|\phi\rangle}$ of every state $|\psi\rangle\in\mathcal{P}(\mathcal{H})$
\begin{equation}
\varpi(\mathcal{B}_{Q}\,|\,\mu)=1,\label{eq:mutlipartite-MR-unity}
\end{equation}
\emph{i.e. }the probability of obtaining a $\lambda\in\Lambda$ accessible
from some $|0\rangle\in\mathcal{B}_{Q}$ is unity. Let $\Omega\subseteq\Lambda$
be any measurable subset for which $\nu(\Omega)=1,\forall\nu\in\Delta_{|0\rangle},\forall|0\rangle\in\mathcal{B}_{Q}$.
Let $M_{Q}$ be any measurement for the basis $\mathcal{B}_{Q}$ and,
for each $|0\rangle\in\mathcal{B}_{Q}$, let $f_{|0\rangle}(\lambda)\eqdef\mathbb{P}_{M_{Q}}(|0\rangle\,|\,\lambda)$
be a measurable function from $\Lambda$ to $[0,1]$. By the lemma
of Appendix~\ref{sec:app:A-Useful-Lemma} it follows that for every $|0\rangle\in\mathcal{B}_{Q}$
and any $\nu\in\Delta_{|0\rangle}$
\begin{eqnarray}
 & 1 & =\int_{\Lambda}\mathrm{d}\nu(\lambda)\,f_{|0\rangle}(\lambda)\\
\Rightarrow & 1 & =\nu(\ker(1-f_{|0\rangle})).
\end{eqnarray}
Similarly, for any other $|1\rangle\in\mathcal{B}_{Q}$ and $\chi\in\Delta_{|1\rangle}$
($|1\rangle\neq|0\rangle$) then 
\begin{eqnarray}
 & 0 & =\int_{\Lambda}\mathrm{d}\chi(\lambda)\,f_{|0\rangle}(\lambda)\\
\Rightarrow & 1 & =\int_{\Lambda}\mathrm{d}\chi(\lambda)\,(1-f_{|0\rangle}(\lambda))\\
\Rightarrow & 1 & =\chi(\ker f_{|0\rangle}).
\end{eqnarray}

Using these definitions, for any $|0\rangle\in\mathcal{B}_{Q}$ 
\begin{equation}
\mu(\Omega)=\mu\left(\Omega\cap\ker(1-f_{|0\rangle})\right)+\mu(\Omega\backslash\ker(1-f_{|0\rangle})).
\end{equation}
Since , $\nu(\ker(1-f_{|0\rangle}))=1,\forall\nu\in\Delta_{|0\rangle}$
then $\mu(\Omega\cap\ker(1-f_{|0\rangle}))$ upper bounds $\varpi(|0\rangle\,|\,\mu)$.
Moreover, since $\ker f_{|0\rangle}\subseteq\Lambda\backslash\ker(1-f_{|0\rangle})$
then for any $\chi\in\Delta_{|1\rangle},|1\rangle\in\mathcal{B}_{Q}\backslash\{|0\rangle\}$
it follows that $\chi(\Omega\backslash\ker(1-f_{|0\rangle}))=1$.
Thus this process can be continued taking $\Omega^{\prime}=\Omega\backslash\ker(1-f_{|0\rangle})$
as a set for which $\chi(\Omega^{\prime})=1,\forall\chi\in\Delta_{|1\rangle},\forall|1\rangle\in\mathcal{B}_{Q}\backslash\{|0\rangle\}$
then for some $|1\rangle\in\mathcal{B}_{Q}\backslash\{|0\rangle\}$
\begin{equation}
\mu(\Omega^{\prime})=\mu\left(\Omega^{\prime}\cap\ker(1-f_{|1\rangle})\right)+\mu(\Omega^{\prime}\backslash\ker(1-f_{|1\rangle}))
\end{equation}
where $\mu(\Omega^{\prime}\cap\ker(1-f_{|1\rangle}))$ upper bounds
$\varpi(|1\rangle\,|\,\mu)$.

By induction therefore, one finds,
\begin{equation}
\mu(\Omega)\geq\sum_{|0\rangle\in\mathcal{B}_{Q}}\varpi(|0\rangle\,|\,\mu).
\end{equation}
Since this holds for any such $\Omega$ then
\begin{equation}
\varpi(\mathcal{B}_{Q}\,|\,\mu)\geq\sum_{|0\rangle\in\mathcal{B}_{Q}}\varpi(|0\rangle\,|\,\mu)
\end{equation}
which, combined with Eqs.~(\ref{eq:multipartite-asymmetric-booles},\ref{eq:mutlipartite-MR-unity})
gives
\begin{equation}
\varpi(\mathcal{B}_{Q}\,|\,\mu)=\sum_{|0\rangle\in\mathcal{B}_{Q}}\varpi(|0\rangle\,|\,\mu)=1.\label{eq:basis-multipartite-sum-unity}
\end{equation}

Finally, noting that $\varpi(|0\rangle\,|\,\mu)\leq|\langle0|\psi\rangle|^{2}$
and $\sum_{|0\rangle\in\mathcal{B}_{Q}}|\langle0|\psi\rangle|^{2}=1$
as $\mathcal{B}_{Q}$ is a basis and $|\psi\rangle$ is normalised,
then Eq.~(\ref{eq:basis-multipartite-sum-unity}) implies
\begin{equation}
\varpi(|0\rangle\,|\,\mu)=|\langle0|\psi\rangle|^{2},\quad\forall|0\rangle\in\mathcal{B}_{Q}
\end{equation}
which is exactly Eq.~(\ref{eq:MR-saturated-overlap}).

\section{Tripartite Asymmetric Overlap and Quantum Measurements} \label{sec:app:Tripartite-Asymmetric-Overlap}

In the proof of the main theorem, it is claimed that 
\begin{equation}
\varpi(|\phi\rangle,|0\rangle\,|\,\mu)\leq\mathbb{P}_{\mathcal{B}^{\prime}}(|0\rangle,|1^{\prime}\rangle\,|\,|\psi\rangle),\label{eq:proof-tirpartite-quantum-overlap}
\end{equation}
where the right hand side is the quantum probability of the corresponding
quantum measurement and the states and measurements are as defined
in the text. This will now be proved.

The gist of the proof is that if $\lambda\in\Lambda$ is accessible
by preparing two quantum states then it may only return measurement
results that are compatible with both preparations. So if $\lambda$
is accessible by preparing both $|\psi\rangle$ and $|0\rangle$ then
it may only return $|0\rangle$ in a measurement in the $\mathcal{B}^{\prime}$
basis and similarly if $\lambda$ is accessible by preparing both
$|\psi\rangle$ and $|\phi\rangle$ then it may only return either
$|0\rangle$ or $|1^{\prime}\rangle$ for a similar measurement. Thus,
if one of these $\lambda$'s is obtained in a preparation of $|\psi\rangle$
then the measurement result is necessarily either $|0\rangle$ or
$|1^{\prime}\rangle$. This will now be fully fleshed out.

Consider a measurement $M_{\mathcal{B}^{\prime}}$ of the basis $\mathcal{B}^{\prime}$
and let $f(\lambda)\eqdef\mathbb{P}_{\mathcal{B}^{\prime}}(|0\rangle\,|\,\lambda)+\mathbb{P}_{\mathcal{B}^{\prime}}(|1\rangle\,|\,\lambda)$
and $g_{3}(\lambda)\eqdef\mathbb{P}_{\mathcal{B}^{\prime}}(|3^{\prime}\rangle\,|\,\lambda)$
be measurable functions from $\Lambda$ to $[0,1]$. Note that by
definition of $|\phi\rangle$ and $|0\rangle$the following quantum
probabilities are known for any $\nu\in\Delta_{|\phi\rangle}$ and
$\chi\in\Delta_{|0\rangle}$
\begin{eqnarray}
\int_{\Lambda}\mathrm{d}\nu(\lambda)\left(f(\lambda)+g_{3}(\lambda)\right) & = & 1\\
\int_{\Lambda}\mathrm{d}\chi(\lambda)f(\lambda) & = & 1.
\end{eqnarray}
Therefore by the lemma of Appendix~\ref{sec:app:A-Useful-Lemma} it follows
that $\nu(\ker(1-f-g_{3}))=\chi(\ker(1-f))=1$. Since $\ker(1-f)\subseteq\ker(1-f-g_{3})$
it follows that $\Omega\eqdef\ker(1-f-g_{3})$ is a measurable subset
of $\Lambda$ for which $\nu(\Omega)=\chi(\Omega)=1,\,\forall\nu\in\Delta_{|\phi\rangle},\chi\in\Delta_{|0\rangle}$.

The desired quantum probability is given by
\begin{equation}
\mathbb{P}_{\mathcal{B}^{\prime}}(|0\rangle,|1^{\prime}\rangle\,|\,|\psi\rangle)=\int_{\Lambda}\mathrm{d}\mu(\lambda)\,f(\lambda).
\end{equation}
Since the quantum probability $\int_{\Lambda}\mathrm{d}\mu(\lambda)g_{3}(\lambda)=0$
is also known (by definition of $|\psi\rangle$) it follows that 
\begin{eqnarray}
\mathbb{P}_{\mathcal{B}^{\prime}}(|0\rangle,|1^{\prime}\rangle\,|\,|\psi\rangle) & = & \int_{\Lambda}\mathrm{d}\mu(\lambda)\left(f(\lambda)+g_{3}(\lambda)\right)\\
 & \geq & \int_{\ker(1-f-g_{3})}\mathrm{d}\mu(\lambda)\left(f(\lambda)+g_{3}(\lambda)\right)\\
 & = & \mu\left(\ker(1-f-g_{3})\right)=\mu(\Omega).
\end{eqnarray}
Since, by definition $\mu(\Omega)$ upper bounds $\varpi(|\phi\rangle,|0\rangle\,|\,\mu)$
this proves Eq.~(\ref{eq:proof-tirpartite-quantum-overlap}).
\end{document}